\documentclass[twocolumn]{autart}    % Enable this line and disable the 
\usepackage{graphicx}          % Include this line if your 
\usepackage[T1]{fontenc}
\usepackage[utf8]{inputenc}
\usepackage{mathptmx}
\usepackage{courier}
\usepackage{cite}
\usepackage{graphicx}
\usepackage[cmex10]{amsmath}
\usepackage{array}
\usepackage{color}
\usepackage{amssymb}
\usepackage{amsfonts}
\usepackage[short]{optidef}
\usepackage{balance}
\usepackage{subcaption}
\usepackage{float}
\usepackage{subfiles}
\usepackage{svg}
\usepackage{comment}
\usepackage{multirow, makecell}
\usepackage{multicol}
\usepackage{arydshln}
\usepackage{url}
\usepackage{nicematrix}
\usepackage{enumitem}

\definecolor{pyplotblue}{RGB}{0, 114, 189}
\definecolor{pyplotorange}{RGB}{255, 133, 27}
\definecolor{pyplotgreen}{RGB}{44, 159, 44}
\definecolor{pyplotyellow}{RGB}{237, 177, 32}
\definecolor{pyplotred}{RGB}{255, 0, 0}

\DeclareMathSymbol{\shortminus}{\mathbin}{AMSa}{"39}
\newcommand{\shortplus}{+}%{\raisebox{.4\height}{\scalebox{.6}{+}}}

\newtheorem{condition}[thm]{Condition}
\newtheorem{property}[thm]{Property}

\newcommand{\indeg}[1]{{\text{deg}^\shortminus}\left(#1\right)}
\newcommand{\outdeg}[1]{{\text{deg}^\shortplus}\left(#1\right)}

\newcommand{\base}{\text{base}}
\newcommand{\uk}{u_k}
\newcommand{\xbase}{x_\mathrm{b}}
\newcommand{\xbasek}{x_{\mathrm{b},k}}
\newcommand{\xbasekplus}{x_{\mathrm{b},k+1}}
\newcommand{\xaug}{x_\mathrm{a}}
\newcommand{\xaugk}{x_{\mathrm{a},k}}
\newcommand{\xaugkplus}{x_{\mathrm{a},k+1}}
\newcommand{\zbase}{z_\mathrm{b}}
\newcommand{\zbasek}{z_{\mathrm{b},k}}
\newcommand{\zaug}{z_\mathrm{a}}
\newcommand{\zaugk}{z_{\mathrm{a},k}}
\newcommand{\wbase}{w_\mathrm{b}}
\newcommand{\wbasek}{w_{\mathrm{b},k}}
\newcommand{\waug}{w_\mathrm{a}}
\newcommand{\waugk}{w_{\mathrm{a},k}}

\newcommand{\yestk}{\hat y_k}

\begin{document}

\begin{frontmatter}
%\runtitle{Insert a suggested running title}  % Running title for regular 
                                              % papers but only if the title  
                                              % is over 5 words. Running title 
                                              % is not shown in output.
                                              
\title{Learning-based augmentation of first-principle models: A linear fractional representation-based approach\thanksref{disclaimer}} % Title, preferably not more than 10 words.

\thanks[disclaimer]{This paper was not presented at any IFAC 
meeting. Implementation of the proposed method is available at \url{https://github.com/JanHHoekstra/Model-Augmentation-Public}\\
\hspace{8pt}This work is funded by the European Union (Horizon Europe, ERC, COMPLETE, 101075836) and has also been supported by %the European Defence Fund programme under grant agreement number No 101103386 and 
the Air Force Office of Scientific Research under award number FA8655-23-1-7061. Views and opinions expressed are however those of the authors only and do not necessarily reflect those of the European Union or the European Research Council Executive Agency% or the European Commission
. Neither the European Union nor the granting authority can be held responsible for them.}
% \thanks[code]{Implementation of the proposed method is available at \url{https://github.com/JanHHoekstra/Model-Augmentation-Public}}
%https://gitlab.com/janhhoeksra0/learning-based-model-augmentation

% \thanks[funding]{\hspace{8pt}This work is funded by the European Union (Horizon Europe, ERC, COMPLETE, 101075836) and has also been supported by the European Defence Fund programme under grant agreement number No 101103386 and has also been supported by the Air Force Office of Scientific Research under award number FA8655-23-1-7061. Views and opinions expressed are however those of the authors only and do not necessarily reflect those of the European Union or the European Research Council Executive Agency or the European Commission. Neither the European Union nor the granting authority can be held responsible for them.}

\author[TUE]{Jan H. Hoekstra%\thanksref{corresponding}
}\ead{j.h.hoekstra@tue.nl},    % Add the 
\author[HUN]{Bendeg\'uz M. Gy\"or\"ok}\ead{gyorokbende@sztaki.hun-ren.hu},
\author[TUE,HUN]{Roland T\'oth}\ead{r.toth@tue.nl},  % (ead) as shown
\author[TUE]{Maarten Schoukens}\ead{m.schoukens@tue.nl}  % (ead) as shown

\address[TUE]{Control Systems Group, Eindhoven University of Technology, The Netherlands}
\address[HUN]{Systems and Control Lab, HUN-REN Institute for Computer Science and Control, Budapest, Hungary}

%\thanks[corresponding]{Corresponding author.}
          
\begin{keyword}                           % Five to ten keywords,  
Nonlinear System Identification; Model Augmentation; Physics-Based Learning.               % chosen from the IFAC 
\end{keyword}                             % keyword list or with the 
                                          % help of the Automatica 
                                          % keyword wizard

\begin{abstract} \vskip -3mm \noindent
    Nonlinear system identification %(NL-SI) 
    has proven to be effective in obtaining accurate models from data for %highly 
    complex real-world systems. In particular, recent encoder-based methods with \emph{artificial neural network state-space} (ANN-SS) models have achieved state-of-the-art performance on various benchmarks, using computationally efficient methods and offering consistent model estimation in the presence of noisy data. However, inclusion of prior knowledge of the system can be further exploited to increase (i) estimation speed, (ii) accuracy, and (iii) interpretability of the resulting models. This paper proposes a model augmentation method that incorporates prior knowledge from \emph{first-principles} (FP) models in a flexible manner. We introduce a novel \emph{linear-fractional-representation} (LFR) model structure that allows for the general representation of various augmentation structures including the ones that are commonly used in the literature, and an encoder-based identification algorithm for estimating the proposed structures together with appropriate initialisation methods. The performance and generalisation capabilities of the proposed method are demonstrated on the identification of a hardening mass-spring-damper system in a simulation study and on the data-driven modelling of the dynamics of an F1Tenth electric car using measured data. \vspace{-4mm}
\end{abstract}       % Abstract of not more than 200 words.

\end{frontmatter}

\setlength{\parskip}{6pt}

\vspace{-2mm}
\section{Introduction} \label{sec:Introduction}
% \OL{Need for system identification}
\vspace*{-7pt}
As control systems are becoming more complex and performance requirements surge, the need for accurate nonlinear models capable of efficiently capturing complicated behaviours of physical systems is rapidly increasing. It is common practice to derive baseline models using \emph{first-principle} (FP) methods, e.g., rigid body dynamics\cite{Schoukens2019}; however, these models provide only an approximate system description. Although more accurate FP models can be developed, this is a labour-intensive process, especially when additional physical effects—such as friction or aerodynamic forces—are included. Modelling these phenomena from first principles often requires dedicated experimental campaigns to identify and estimate the associated unknown parameters. Furthermore, the resulting models may become too complex to be handled analytically. In some cases, reliable FP descriptions of the to-be-modelled effects may not even exist, resulting in %limiting the modelling %process 
%to 
approximations with varying levels of fidelity.
% \vspace*{-6pt}

% \OL{Black-box identification delivers accurate models from data}
To overcome these issues, \emph{nonlinear system identification} (NL-SI) methods offer an alternative option to estimate models directly from measurement data \cite{Schoukens2019}. Black-box models, particularly those that incorporate \emph{artificial neural networks} (ANNs), have achieved unprecedented accuracy in capturing complex behaviours. In control applications, ANN-based \emph{state-space} (SS) models have proven to be effective in handling high-order systems and capturing complex nonlinear dynamics \cite{gerben2022}.
% \cite{suykens1995, Schoukens2021}. 
% \vspace{-6pt}

% \OL{Challenges with black-box identification}
% \OL{1. Lacking interpretability of estimated model}
Although black-box methods may result in accurate models, they also have serious downsides. First, flexible function approximators are difficult to interpret, leading to model behaviour that is not well understood. This in turn limits the reliability of the model during, e.g., extrapolation beyond the training data. This is a significant drawback for control applications, where interpretable models are preferred in the design process \cite{ljung2010perspectives, Panel2021}.
% \OL{2. Time spent on capturing expected behavior}
The second drawback is the significant time spent learning expected behaviour that has already been modelled thoroughly, e.g., FP-based understanding of the rigid-body-dynamics of the system.
% \vspace{-5pt}

% \OL{Model augmentation for system identification}
% Various methods in the literature aim to address these downsides, such as physics-informed neural networks \cite{raissi_physics-informed_2019} and physics-guided neural networks \cite{Daw2022}. 

Physics-informed neural networks \cite{raissi_physics-informed_2019} and physics-guided neural networks \cite{Daw2022} embed the prior knowledge of the physics in the form of equations (algebraic of partial differential) in the cost function, enforcing the learnt functions to fit to known physics behaviour. This leads to more interpretable models as the known physics are enforced, and shows faster learning convergence. These methods, however, require the knowledge of such physics equations and still rely on a black-box model to capture the entire system.
\vspace*{-1pt}

%for control applications
A promising approach is model augmentation, e.g., \cite{sun2020comprehensive, gotte2022composed, Groote2022, Shah2022}. This method combines baseline models with flexible function approximators, such as ANNs, in a combined model structure. As a result of this structural combination, the prior knowledge is directly captured in the baseline model and the learning components only need to model unknown dynamics. For control engineering, such a structure is beneficial, as it is clear how a well-understood baseline model is combined with  black-box elements. % resulting in the final model.
% For control engineering, this approach further enhances the interpretability of the obtained model, as it is clear how the well-understood baseline model is augmented and how this affects the behaviour of the resulting entire model.
\vspace*{-1pt}

% \OL{Problem in variety of model augmentation structures}
In the literature, there are a variety of different model augmentation structures, such as parallel\cite{sun2020comprehensive} and series\cite{gotte2022composed, Groote2022, Shah2022} interconnections. These interconnections reflect in what form the known baseline model is combined with the learning component that models the unknown behaviour of the system. Although different interconnections may result in an equally accurate model, model complexity and convergence speed also need to be considered, especially when the final model is utilised for control purposes.
%, such as online learning or robust control.
One interconnection may be equally accurate while having a less complex parameterisation compared to others. 
%Which interconnection this will be for a specific baseline model and system is not trivial to determine.
It is not trivial to determine which interconnection is the most advantageous for a specific baseline model and data-generating system, as the choice of the optimal interconnection depends on the unknown dynamics of the system. To address this, further research into model augmentation methods is required to develop, e.g., automatic model selection methods. To facilitate this research, a general model augmentation structure is required, such that model augmentation methods can be developed efficiently and compared across different works in literature. Such a general model augmentation structure is currently lacking in the literature.

% To address this, a search is required through the set of models spanned by the combination of the baseline model and the learning component. For such a search, a general model augmentation structure is needed such that all possible model augmentation structures can be represented in a general manner and make use of a common identification approach, instead of developing separate approaches for each model augmentation structure.

% In addition, such a common identification method is required. Neither such a general model augmentation structure nor a corresponding identification method have been considered so far in literature.
% For such a search, a general model augmentation structure is needed such that all possible model augmentation structures can be represented in a general manner. However, such a general model augmentation structure is lacking in the literature.
% \vspace{-6pt}

% \OL{Proposal unified model structure based on LFR}
%\TJ{The below section requires further explanation regarding the motivation for using LFR model structures. The current motivation based on robust control is too direct to the point, and the implied usefulness for robust control is too strong.}
To solve this problem, a general model augmentation structure based on a \emph{Linear Fractional Representation} (LFR) is proposed, which has been chosen for its modular and flexible nature, enabling a generalised form for augmenting the FP or the already known dynamics. The formulation of LFRs allows for systematic model augmentation while maintaining a clear separation between the baseline and learning components. The proposed model augmentation structure is able to express a wide range of model augmentation structures used in literature, and thus is a unified representation. Furthermore, LFRs are commonly used in the robust control field for uncertainty modelling in a generalised plant format. 
%Thus, in addition to the unified representation, the proposed structure then also has the advantage that it is compatible with well-established control algorithms for classical LFRs \cite{Zhou1996, Schoukens2018}, allowing the identified models to be applied directly in control design.
Thus, in addition to the general representation, the proposed structure also ensures compatibility with well-established control methodologies for classical LFRs \cite{Zhou1996}, making them a versatile choice for a wide range of applications. The price of the uniform model structure is that well-posedness problems may arise. To address this, we examine the computational graph of the proposed structure and provide well-posedness conditions based on this graph.
% provide a detailed theoretical discussion of the proposed methodology.
% \OL{Proposal identification algorithm}
We also propose an identification algorithm capable of handling the general LFR model augmentation structure with consistency guarantees, while also addressing the joint identification of both the baseline model and learning component parameters and managing overparameterisation through a physics-guided regularisation method.
% \vspace{-6pt}

The main contributions %of this work 
are summarised as% %follows
\footnote{A preliminary version of the methods discussed in this work was presented in \cite{hoekstra_learning-based_2024}. Compared to \cite{hoekstra_learning-based_2024}, the key differences are the full parameterisation of the LFR, the computational graph, a detailed %theoretical discussion of the 
well-posedness condition, an extended description of the identification algorithm, a more in-depth simulation study, and a real-world identification example using measured data.}:
\vspace*{-1pt}

\begin{itemize}
    \item We present a novel, general LFR-based model augmentation structure with two possible parameterisations: one fully parameterised, offering high flexibility in estimation at the cost of model interpretability, and the other structured and sparsely parameterised, promoting transparency, but requiring more prior knowledge to define the structure;
    \item We provide proof of the representation capability of the proposed LFR-based model augmentation structure, demonstrating that it is capable of representing all commonly used model augmentation structures in the existing literature;
    % \item We present a physics-guided regularization method to enhance model interpretability and extrapolation capabilities when the baseline parameters and the parameters of the learning component are estimated simultaneously;
    \item We provide %mathematical 
    conditions under which well-posedness of the proposed model structure is guaranteed;
    %\item We provide proof for consistent estimation of the model parameters in the proposed model structure;
    \item We provide an efficient identification algorithm for data-driven estimation of the augmented models under the proposed structure with consistency guarantees.
    \item We perform rigorous simulation and real-world measurement based studies to validate the proposed methodology.
\end{itemize}
\vspace*{-1pt}

The paper is organised as follows: first, the system identification problem and available model augmentation approaches are presented in Section~\ref{sec:ProblemStatement}. Then, Section~\ref{sec:Method} introduces the LFR-based model augmentation structure and its computational graph. Section~\ref{sec:well_posedness} provides conditions under which well-posedness of the model is proven. Next, Section~\ref{sec:Identification} provides an identification algorithm with consistency guarantees. A hardening \emph{mass-spring-damper} (MSD) simulation example and real-world experiments with an F1Tenth electric car are used to demonstrate the performance of the proposed identification method in Sections~\ref{sec:SimulationStudy} and \ref{sec:RealWorldIdent}, respectively. The conclusions are given in Section~\ref{sec:Conclusion}.

\vspace{-1mm}
\section{Model Augmentation Problem Setting} \label{sec:ProblemStatement}\vspace{-1mm}
% \OL{System description}
We consider the system to be identified given by the \emph{discrete-time} (DT) nonlinear representation
\vspace*{-6pt}
\begin{subequations}\label{eq:nl_dyn}
	\begin{align}
		x_{k+1}&=f(x_k,u_k),\\
		y_k&=h(x_k, u_k) +e_k,
	\end{align} %\vspace*{-10pt}
\end{subequations} 
\vskip -4mm
where $x_k \in \mathbb{R}^{n_\mathrm{x}}$ is the state, $u_k \in \mathbb{R}^{n_\mathrm{u}}$ is the input, $y_k \in \mathbb{R}^{n_\mathrm{y}}$ is the output signal of the system at time moment $k\in\mathbb{Z}$ with $e_k$ an i.i.d. white noise process with finite variance, representing measurement noise, $f: \mathbb{R}^{n_\mathrm{x}} \times \mathbb{R}^{n_\mathrm{u}} \rightarrow \mathbb{R}^{n_\mathrm{x}}$ is the state-transition function and $h: \mathbb{R}^{n_\mathrm{x}}\rightarrow \mathbb{R}^{n_\mathrm{y}}$ is the output function. This state-space representation is a general form that can describe a wide range of %system 
dynamics encountered in practice.

% \TJ{Keep what is needed for consistency proof.}

% \OL{Baseline model description in state-space form.}
We assume a baseline model of \eqref{eq:nl_dyn} is available in the %an NL-SS 
form
\vspace*{-6pt}
\begin{subequations}\label{eq:baseline_model}
    \begin{align}
        \xbasekplus & = f_\text{base}\left(\theta_\text{base}, \xbasek, \uk\right), \\
        \yestk & = h_\text{base}\left(\theta_\text{base}, \xbasek, \uk\right),
    \end{align}
\end{subequations}
\vspace{-22pt}

% \noindent
where $\xbasek \in \mathbb{R}^{n_{\xbase}}$ is the baseline model state, $\yestk \in \mathbb{R}^{n_\mathrm{y}}$ is the model output,  and $f_\text{base} : \mathbb{R}^{n_{\xbase}} \times \mathbb{R}^{n_\mathrm{u}} \rightarrow \mathbb{R}^{n_{\xbase}}$ with $h_\text{base}: \mathbb{R}^{n_{\xbase}} \times \mathbb{R}^{n_\mathrm{u}}  \rightarrow \mathbb{R}^{n_{\mathrm{y}}}$ are the baseline state-transition and output readout functions respectively, parameterised by $\theta_\text{base}\in\mathbb{R}^{n_{\theta_\text{base}}}$. The parameters $\theta_\base$ correspond to the physical parameters associated with system \eqref{eq:nl_dyn}. 

\begin{figure}
    \centering
    \includegraphics[width=1.0\linewidth]{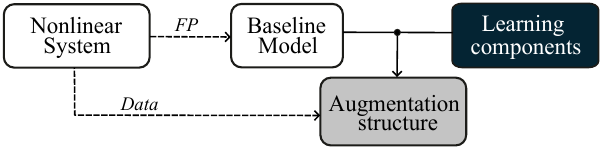}
    \caption{Learning-based model augmentation concept.}
    \label{fig:MA_concept}
\end{figure}

In model augmentation, the baseline model is combined with learning components in a combined model structure. The parameters of this model augmentation structure are then estimated using data measurements of the system as shown in Fig.~\ref{fig:MA_concept}. The general model augmentation structure is described as
\vspace*{-8pt}
\begin{subequations}\label{eq:augmentation_structure}
    \begin{align}
        \xbasekplus &= \left(f_\text{base} \star f_\text{aug}\right) \left(\xbasek, \xaugk, u_k\right), \\
        \xaugkplus &= g_\text{aug} \left(\xbasek, \xaugk, u_k\right), \\
        \hat{y}_k &= \left(h_\text{base} \star h_\text{aug}\right) \left(\xbasek, \xaugk, u_k\right),
    \end{align}
\end{subequations}
\vspace*{-24pt}

where $\xaugk$ are additional states added for dynamic augmentation structures, and $f_\text{aug}$, $g_\text{aug}$ and $h_\text{aug}$ are the learning components parameterised by $\theta_\text{aug} \in\mathbb{R}^{n_{\theta_\text{aug}}}$. For notational simplicity, both $\theta_\base$ and $\theta_\text{aug}$ are not written out in \eqref{eq:augmentation_structure}. The operator $\star$ represents an interconnection between two functions. This interconnection can represent a variety of different forms of model augmentation structure used in the literature, such as static parallel\cite{sun2020comprehensive} and static series~\cite{gotte2022composed, Groote2022, Shah2022} structures. Due to the state-space form of the model structure, augmentations can occur at the state and/or output level. We show a selection of possible state level augmentations in Table~\ref{tab:augmentation_structures} and output level augmentations in Table~\ref{tab:output_augmentation_structures}. Here, static refers to augmentations that do not add new state dimensions beyond the baseline model states $\xbase$. Dynamic augmentation structures\cite{bohlin2006practical, hoekstra_learning-based_2024}, on the other hand, add new augmentation states $\xaug$ to model missing dynamics. A broad range of further augmentations are possible. In this work, we restrict attention to the elementary augmentations listed in Tables \ref{tab:augmentation_structures} and \ref{tab:output_augmentation_structures}, which will be discussed through this paper.

\begin{table}
    \centering
    \caption{Classes of state model augmentation structures.}
    \begin{tabular}{l|l}
        \hline
        \multirow{2}{6em}{static parallel (S-SP)} & \multirow{2}{17em}{$\xbasekplus =  f_\text{base}\left(\xbasek, \uk\right) + f_\text{aug}\left(\xbasek, \uk\right)$}  \\
        & \\
        \hline
         \multirow{2}{7em}{static series output (S-SSO)} &  \multirow{2}{17em}{$\xbasekplus = f_\text{aug}\left(\xbasek, \uk, f_\text{base}\left(\xbasek, \uk\right)\right)$}  \\
         & \\
        \hline
        \multirow{2}{6em}{static series input (S-SSI)} &  \multirow{2}{17em}{$\xbasekplus = f_\text{base}\left(f_\text{aug}\left(\xbasek, \uk\right)\right)$}  \\
        & \\ 
        \hline
        \multirow{2}{7em}{dynamic parallel (S-DP)} &  $\xbasekplus = f_\text{base}\left(\xbasek, \uk\right) + f_\text{aug}\left(\xbasek, \xaugk, \uk\right)$ \\
         &  $\xaugkplus = g_\text{aug}\left(\xbasek, \xaugk, \uk\right)$ \\
        \hline
        \multirow{2}{7em}{dynamic series output (S-DSO)} &  $\xbasekplus = f_\text{aug}\left(\xbasek, \xaugk, \uk, f_\text{base}\left(\xbasek, \uk\right)\right)$  \\
        & $\xaugkplus = g_\text{aug}\left(\xbasek, \xaugk, \uk\right)$  \\
        \hline
        \multirow{2}{7em}{dynamic series input (S-DSI)} &  $\xbasekplus = f_\text{base}\left(f_\text{aug}\left(\xbasek, \xaugk, \uk\right)\right)$  \\
        & $\xaugkplus = g_\text{aug}\left(\xbasek, \xaugk, \uk\right)$  \\
        \hline
    \end{tabular}
    \label{tab:augmentation_structures}
\end{table}

\begin{table}
    \centering
    \caption{Classes of output model augmentation structures.}
    \begin{tabular}{l|l}
        \hline
        \multirow{2}{6em}{static parallel (O-SP)} & \multirow{2}{17em}{$\yestk = h_\text{base}\left(\xbasek, \uk\right) + h_\text{aug}\left(\xbasek, \uk\right)$}  \\
        & \\
        \hline
        \multirow{2}{7em}{static series output (O-SSP)} &  \multirow{2}{15em}{$\yestk = h_\text{aug}\left(\xbasek, \uk,h_\text{base}\left(\xbasek, \uk\right)\right)$}  \\
        & \\
        \hline
        \multirow{2}{6em}{static series input (O-SSI)} &  \multirow{2}{15em}{$\yestk = h_\text{base}\left(h_\text{aug}\left(\xbasek, \uk\right)\right)$}  \\
        & \\ 
        \hline
        \multirow{2}{7em}{dynamic parallel (O-DP)} &  $\yestk =  h_\text{base}\left(\xbasek, \uk\right) + h_\text{aug}\left(\xbasek, \xaugk, \uk\right)$  \\
        &  $\xaugkplus = g_\text{aug}\left(\xbasek, \xaugk, \uk\right)$  \\
        \hline
        \multirow{2}{7em}{dynamic series output (O-DSO)} &  $\yestk =  h_\text{aug}\left(\xbasek, \xaugk, \uk, h_\text{base}\left(\xbasek, \uk\right)\right)$ \\ 
        & $\xaugkplus = g_\text{aug}\left(\xbasek, \xaugk, \uk\right)$ \\
        \hline
        \multirow{2}{7em}{dynamic series input (O-DSI)} &  $\yestk = h_\text{base}\left(h_\text{aug}\left(\xbasek, \xaugk, \uk\right)\right)$ \\ 
        & $\xaugkplus = g_\text{aug}\left(\xbasek, \xaugk, \uk\right)$ \\
        \hline
    \end{tabular}
    \label{tab:output_augmentation_structures}
\end{table}
%
% The augmentation structures shown in Tables~\ref{tab:augmentation_structures}~and~\ref{tab:output_augmentation_structures} are only a small portion of possible interconnections between a given baseline model and the learning components as described in \eqref{eq:augmentation_structure}. The choice of a model augmentation structure represented by such a specific interconnection is an important choice.
%
% \vspace{-5mm}
As discussed in the introduction, a general augmentation structure is desired. For this, a parameterisation of the operator $\star$ is required, capable of characterising the interconnection between the baseline model and the learning components. This would realise a fully parameterised general augmentation structure of \eqref{eq:augmentation_structure}. Additionally, an identification algorithm able to estimate the parameters of this general model augmentation structure is proposed, under the restriction that the model is well-posed.

\section{LFR-based augmentation structure} \label{sec:Method}
\vspace{-1mm}
In this section, we formulate a general representation of \eqref{eq:augmentation_structure} in an LFR-based augmentation structure. Next, we derive the graph based representation of the proposed model structure, which is to be used to introduce sparsity to the LFR-based augmentation structure as well as for deriving conditions for well-posedness in Section~ \ref{sec:well_posedness}. Finally, we introduce the parameterisation of the learning component that we will use to formulate our augmentation approach.

% Second, we extend the LFR-based structure to a structured LFR-based augmentation structure based on graph theory. Based on this structured LFR-based structure, we then pose well-posedness conditions for both introduced model augmentation structures.

% Finally, we provide a conversion between the two representations and provide conditions for well-posedness.

% In this section, we consider two representations of the general model augmentation structure. First, we formulate a general representation of \eqref{eq:augmentation_structure} in a graph setting that defines the interconnection as an adjacency matrix. Second, we extend the interconnection structure to an LFR structure with a fully parameterised interconnection matrix that can itself represent the linear dynamics of the system. Finally, we provide a conversion between the two representations and provide conditions for well-posedness.

\subsection{General LFR-based augmentation structure}\label{sec:parametrised_interconnection}\vspace{-1mm}
% Notably, the interconnection augmentation structure \eqref{eq:interconnection} closely resembles the well-known LFR formulation. 
%
% The flexibility of the LFR representation has made it widely used in robust control \cite{Zhou1996, junnarkar2022synthesis} and linear parameter-varying control \cite{Toth2010, Schoukens2018}. 
%
As discussed in Section~\ref{sec:ProblemStatement}, many model augmentation structures are available in the literature, and now we propose a unified structure based on the \emph{Linear Fractional Representation} (LFR) that can represent all augmentation arrangements. 
%A key property of LFRs is that the primitive operations of summation, multiplication, and inversion allow manipulations with LFRs, and an LFR of an LFR is also an LFR\cite{redheffer1960certain}.
The flexibility of this representation has made it popular in the field of robust control \cite{Zhou1996} and linear parameter-varying-control \cite{Toth2010}. Furthermore, an LFR can also include nonlinear components in the interconnections\cite{veenman2016robust}, which has made LFRs useful for black-box nonlinear system representations\cite{Schoukens2020, shakib2022computationally} and implicit learning\cite{el2021implicit}. Recent results have also shown that stability properties can be enforced on these black-box LFRs in a constraint-free manner at the cost of some representation capability \cite{revay2020contracting, frank2022robust}.

%We begin by 
Introduce the following notation for the baseline terms:
\vspace*{-10pt}
\begin{equation}
    \phi_\mathrm{base}(\theta_\mathrm{base}, \zbasek) = \begin{bmatrix}
        f_\mathrm{base}(\theta_\mathrm{base},\zbasek)\\
        h_\mathrm{base}(\theta_\mathrm{base},\zbasek)
    \end{bmatrix}.
\end{equation}
\vspace*{-18pt}

Moreover, we denote the learning component as $\phi_\mathrm{aug}$, which can be represented by any universal function approximator. We assume that it is implemented as a function with $\theta_\mathrm{aug}\in\mathbb{R}^{n_{\theta_\mathrm{aug}}}$ collecting its parameters. Latent variables $\wbasek\in\mathbb{R}^{n_{\xbase}+n_y}$, $\waugk\in\mathbb{R}^{n_{\waug}}$, $\zaugk\in\mathbb{R}^{n_{\zaug}}$, and $\zbasek\in\mathbb{R}^{n_{\xbase}+n_u}$ are introduced, and expressed as \vspace{-1.5mm}
\begin{equation}\label{eq:LFR_zb_za}
    \begin{bmatrix}
        \zbasek\\
        \zaugk
    \end{bmatrix} = \begin{bmatrix}
        C_z^\mathrm{b}\\
        C_z^\mathrm{a}
    \end{bmatrix} \hat{x}_k + \begin{bmatrix}
        D_{zu}^\mathrm{b}\\D_{zu}^\mathrm{a}
    \end{bmatrix} \uk + \underbrace{\begin{bmatrix}
        D_{zw}^\mathrm{bb} & D_{zw}^\mathrm{ba}\\
        D_{zw}^\mathrm{ab} & D_{zw}^\mathrm{aa}
    \end{bmatrix}}_{D_{zw}} \begin{bmatrix}
        \wbasek\\ \waugk
    \end{bmatrix},
\end{equation}
\vspace*{-20pt}

where $C_z^\mathrm{b}$, $C_z^\mathrm{a}$, \dots, $D_{zw}^\mathrm{a}$ are real matrices with dimensions compatible with the signal dimensions, and their elements are parameters that are optimised during model learning. The state transition and output equations are expressed as\vspace{-1.5mm}
\begin{align*}\label{eqs:state_output_LFR}
    \underbrace{\begin{bmatrix}
        \xbasekplus\\ \xaugkplus
    \end{bmatrix}}_{\hat{x}_{k+1}} & = \underbrace{\begin{bmatrix}
        A^\mathrm{bb} & A^\mathrm{ba}\\
        A^\mathrm{ab} & A^\mathrm{aa}
    \end{bmatrix}}_{A} \underbrace{\begin{bmatrix}
        \xbasek\\ \xaugk
    \end{bmatrix}}_{\hat{x}_k} + \underbrace{\begin{bmatrix}
        B_u^\mathrm{b}\\ B_u^\mathrm{a}
    \end{bmatrix}}_{B_u} \uk +\underbrace{\begin{bmatrix}
        B_w^\mathrm{bb} & B_w^\mathrm{ba}\\
        B_w^\mathrm{ab} & B_w^\mathrm{aa}
    \end{bmatrix}}_{\left[\begin{array}{c:c}B_w^\mathrm{b} & B_w^\mathrm{a}\end{array}\right]} \underbrace{\begin{bmatrix}
        \wbasek\\ \waugk
    \end{bmatrix}}_{w_k},\\
    \hat{y}_k & = \underbrace{\begin{bmatrix}
        C_y^\mathrm{b} & C_y^\mathrm{a}
    \end{bmatrix}}_{C_y} \hat{x}_k + D_{yu} \uk + \begin{bmatrix}
        D_{yw}^\mathrm{b} & D_{yw}^\mathrm{a}
    \end{bmatrix} w_k,
\end{align*}
\vspace*{-20pt}

where $A$, $B_u$, $C_y$, and $D_{yu}$ are real matrices with appropirate signal dimensions representing the linear parts of the unmodeled dynamics, the baseline part participates in the relation through matrices $B_w^\mathrm{b}$ and $D_{yw}^\mathrm{b}$, while the nonlinear black-box terms effect the dynamics through  $B_w^\mathrm{a}$ and $D_{yw}^\mathrm{a}$.

Finally, the LFR-based model augmentation structure (shown in Fig.~\ref{fig:LFR-based_structure}) can be expressed in a compact form, as \vspace{-1.5mm}
\begin{subequations}\label{eqs:general_LFR}
\begin{align}
    \left[\begin{array}{c}
        \hat{x}_{k+1}\\ \hat{y}_k \\ \hdashline[5pt/2pt]  \zbasek\\ \zaugk
    \end{array}\right] &= \underbrace{\left[\begin{array}{cc;{5pt/2pt}cc}
        A & B_u & B_w^\mathrm{b} & B_w^\mathrm{a}\\
        C_y & D_{yu} & D_{yw}^\mathrm{b} & D_{yw}^\mathrm{a}\\ \hdashline[5pt/2pt] 
        C_z^\mathrm{b} & D_{zu}^\mathrm{b} & D_{zw}^\mathrm{bb} & D_{zw}^\mathrm{ba}\\
        C_z^\mathrm{a} & D_{zu}^\mathrm{a} & D_{zw}^\mathrm{ab} & D_{zw}^\mathrm{aa}
    \end{array}\right]}_{W(\theta_\mathrm{LFR})} \begin{bmatrix}
        \hat{x}_k\\ u_k\\ \hdashline[5pt/2pt]  \wbasek\\ \waugk
    \end{bmatrix}, \\
        \wbasek & = \phi_\text{base}\left(\theta_\text{base}, \zbasek\right), \label{eq:LFR_zb}\\
        \waugk & = \phi_\text{aug}\left(\theta_\text{aug}, \zaugk\right), \label{eq:LFR_za}
\end{align}
\end{subequations}
where $W$ is the LFR matrix, and all parameters determining the matrices $A$, $B_u$, \dots, $D_{zw}^\mathrm{aa}$ are collected into $\theta_\mathrm{LFR}$. Since $\theta_\mathrm{LFR}$ is included in the (tunable) model parameters, the final interconnection structure of the LFR-based augmentation is formed throughout model learning, hence the flexibility of the approach. 
However, an inherent challenge of LFR model structures is ensuring well-posedness of the model structure. As the model structure allows algebraic loops, it is possible the retrieve ill-posed realisations. We define the \emph{well-posedness} (WP) property as
% The feedback connection inherent in the proposed model augmentation structure requires an analysis of the well-posedness of the model structure and its parameterisation, as this is essential to ensure well-posedness during the estimation process. First, the \emph{well-posedness} (WP) property is defined as
%

\begin{figure}
    \centering
    \includegraphics[width=0.5\textwidth]{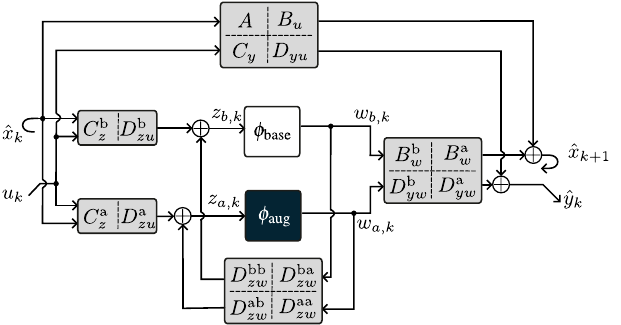}
    \caption{LFR-based augmentation structure of the baseline model characterised by $\phi_\text{base}$ with learning components $\phi_\text{aug}$. \vspace*{-6pt}}
    \label{fig:LFR-based_structure}
\end{figure}

\begin{property}[Well-posedness]\label{condition:well-posedness}
For any value of $\hat x_k \in \mathbb{R}^{n_\mathrm{x}}$ and $u_k \in \mathbb{R}^{n_\mathrm{u}}$, the signal relations in \eqref{eqs:general_LFR} admit a unique solution $z_k = \begin{bmatrix} \zbasek^\top & \zaugk^\top \end{bmatrix}^\top$.
\end{property}
Conditions to ensure the well-posedness are introduced in Section~\ref{sec:well_posedness}.
The unified representation capability of the proposed LFR structure is shown by the following theorem.
% As discussed in Section~\ref{sec:ProblemStatement}, the LFR-based augmentation structure aims to represent a wide range of model augmentation representations, some of which are already present in the literature. We show this representation capability by the following theorem.
%
% These representations, such as the ones in Tables~\ref{tab:augmentation_structures}~and~\ref{tab:output_augmentation_structures}, can be achieved by assigning specific values to the LFR matrix $W$, as shown in Appendix~\ref{appendix:general_LFR_augmentation_forms}.
%
\begin{thm}[Unified representation]
\label{thm:representation_ma_structures}
Given a baseline model \eqref{eq:baseline_model} with parameters $\theta_\mathrm{base}$ connected to learning components $f_\mathrm{aug}$, $g_\mathrm{aug}$ and $h_\mathrm{aug}$, each parametrised with $\theta_\mathrm{aug}$, in terms of \eqref{eq:augmentation_structure}, where the operator $\star$ corresponds to any of the interconnections listed in Table \ref{tab:augmentation_structures} and \ref{tab:output_augmentation_structures}. Then, for the considered model structure \eqref{eqs:general_LFR}, there exists a $\theta_\mathrm{LFR} \in \mathbb{R}^{n_{\theta_\text{LFR}}}$ and choices of latent dimensions  $n_{\waug}, n_{\zaug}  \geq 0$ and a $\phi_\mathrm{aug}$ function, such that \eqref{eq:augmentation_structure} and \eqref{eqs:general_LFR} are equivalent representations of the same dynamic behaviour.
% Let $f_{\text{base}}$ and $h_{\text{base}}$ denote the baseline model, 
% and let $f_{\text{aug}}$ denote the considered learning functions. 
% Then, there exists a parameterisation $\theta_\text{LFR} \in \mathbb{R}^{n_{\theta_\text{LFR}}}$ of the LFR matrix for each model augmentation structure 
% listed in Tables~\ref{tab:augmentation_structures} and~\ref{tab:output_augmentation_structures}, 
% such that the LFR-based model augmentation structure \eqref{eqs:general_LFR} 
% is equivalent to the corresponding model augmentation structure.
\end{thm}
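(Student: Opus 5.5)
The proof is constructive: for each of the twelve structures in Tables~\ref{tab:augmentation_structures} and~\ref{tab:output_augmentation_structures}, I will write down an explicit $W(\theta_\mathrm{LFR})$, latent dimensions $n_{\waug},n_{\zaug}$ and a choice of $\phi_\mathrm{aug}$. I then check by direct substitution that the resulting relations in \eqref{eqs:general_LFR} reduce to the corresponding equations in \eqref{eq:augmentation_structure}.

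Because state-level and output-level augmentations act on different block rows of $W$, I first treat them separately.
\begin{itemize}
\item For state augmentations the output rows are fixed so that $\hat y_k = h_\mathrm{base}$. This means $C_y=0$, $D_{yu}=0$, $D_{yw}^\mathrm{b}=[0\;\,I]$ and $D_{yw}^\mathrm{a}=0$, with $\zbasek=[\xbasek^\top\;u_k^\top]^\top$.
\item For output augmentations the state rows are fixed symmetrically, by selecting the $f_\mathrm{base}$ block of $\wbasek$.
\end{itemize}
A combined state-and-output augmentation is then handled by stacking the learning blocks, with $\phi_\mathrm{aug}$ taken block-diagonal.

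Throughout, I let $\phi_\mathrm{aug}$ be the stacked map $[f_\mathrm{aug};g_\mathrm{aug}]$ (or $[h_\mathrm{aug};g_\mathrm{aug}]$), with $\zaugk$ gathering its arguments. $A=0$, $B_u=0$ and all unused blocks are set to zero. The two kinds of interconnection are then realised as follows.
\begin{itemize}
\item \textbf{Parallel structures.} Take $C_z^\mathrm{a}=\mathrm{diag}(I,I)$ acting on $[\xbasek;\xaugk]$ together with $D_{zu}^\mathrm{a}=I$, so that $\zaugk=[\xbasek;\xaugk;u_k]$. Then set $B_w^\mathrm{b}=\left[\begin{smallmatrix} I & 0\\ 0&0\end{smallmatrix}\right]$ and $B_w^\mathrm{a}=\left[\begin{smallmatrix} I & 0\\ 0&I\end{smallmatrix}\right]$, so that $\xbasekplus=f_\mathrm{base}+f_\mathrm{aug}$ and $\xaugkplus=g_\mathrm{aug}$.
\item \textbf{Series-output structures.} Use the feedthrough $D_{zw}^\mathrm{ab}=\left[\begin{smallmatrix}0\\ I\;0\end{smallmatrix}\right]$ to append $f_\mathrm{base}$ to $\zaugk$, set $B_w^\mathrm{b}=0$, and route $\xbasekplus$ only through $f_\mathrm{aug}$.
\item \textbf{Series-input structures.} Set $C_z^\mathrm{b}=0$ and $D_{zu}^\mathrm{b}=0$, and use $D_{zw}^\mathrm{ba}$ so that $\zbasek=f_\mathrm{aug}(\cdot)$. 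The output of $f_\mathrm{aug}$ must have the dimension $n_{\xbase}+n_u$ of $\zbasek$; this is a free choice of $n_{\waug}$.
\end{itemize}
The static variants are the special case $n_{\xaug}=0$, where the $g_\mathrm{aug}$ block is dropped.

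The remaining step is to show that each construction is well posed in the sense of Property~\ref{condition:well-posedness}, so that "equivalent representation" is meaningful. I expect this to be the main, though mild, obstacle, because $D_{zw}\neq 0$ in the series cases. The plan is to observe that in every construction $D_{zw}$ is strictly block-triangular, i.e. nilpotent.
\begin{itemize}
\item In the series-output case only $D_{zw}^\mathrm{ab}\neq0$: $\zbasek$ depends only on $(\hat x_k,u_k)$, which fixes $\wbasek$, which in turn determines $\zaugk$.
\item In the series-input case only $D_{zw}^\mathrm{ba}\neq0$, and $\zaugk$ is computed first.
\end{itemize}
The algebraic loop therefore unrolls in at most two evaluations, and $z_k$ is uniquely determined for every $(\hat x_k,u_k)$. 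Substituting this unique solution into the state and output rows reproduces \eqref{eq:augmentation_structure} exactly, with $\hat x_k=[\xbasek;\xaugk]$ and identical initial conditions. This yields identical input–output and state trajectories, which proves the theorem.
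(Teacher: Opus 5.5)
Your overall route is the paper's: explicit selection matrices in $C_z$, $D_{zu}$, $D_{zw}^\mathrm{ab}$ or $D_{zw}^\mathrm{ba}$ and $B_w$, direct substitution, and the static cases obtained via $n_{x_\mathrm{a}}=0$. Two of your choices are tighter than the appendix:
\begin{itemize}
\item You define $\phi_\mathrm{aug}$ exactly as the stacked learning map. The paper instead invokes universal approximation, which strictly gives only approximate equality.
\item You check well-posedness by unrolling the block-triangular $D_{zw}$, a step the paper skips.
\end{itemize}
The parallel and series-output cases are correct as written.

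The series-input cases, however, fail as stated. The LFR has a single baseline block, $w_{\mathrm{b},k}=[f_\mathrm{base}(z_{\mathrm{b},k});h_\mathrm{base}(z_{\mathrm{b},k})]$, so $f_\mathrm{base}$ and $h_\mathrm{base}$ are always evaluated at the same $z_{\mathrm{b},k}$. Your output rows give $h_\mathrm{base}(x_{\mathrm{b},k},u_k)$ only because you assumed $z_{\mathrm{b},k}=[x_{\mathrm{b},k};u_k]$.
\begin{itemize}
\item For S-SSI/S-DSI you then set $z_{\mathrm{b},k}=f_\mathrm{aug}(\cdot)$. With $D_{yw}^\mathrm{b}=[0\;\,I]$ this gives $\hat y_k=h_\mathrm{base}(f_\mathrm{aug}(\cdot))$, not the unaugmented output required by \eqref{eq:augmentation_structure}.
\item Symmetrically, for O-SSI/O-DSI the state rows give $x_{\mathrm{b},k+1}=f_\mathrm{base}(h_\mathrm{aug}(\cdot))$ instead of $f_\mathrm{base}(x_{\mathrm{b},k},u_k)$.
\end{itemize}
As long as $\phi_\mathrm{aug}$ contains only the learning maps, no choice of $W$ repairs this for nonlinear $h_\mathrm{base}$ (resp.\ $f_\mathrm{base}$). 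Your closing claim that substitution ``reproduces \eqref{eq:augmentation_structure} exactly'' is therefore false in these four cases.

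The paper's appendix avoids the issue only by claiming equivalence for the series-input structures ``at the state transition level'' and treating the output-level ones by analogy. To close the gap you have two options.
\begin{itemize}
\item Adopt that restricted notion of equivalence explicitly. That is, read series-input augmentation as replacing the baseline's input for both $f_\mathrm{base}$ and $h_\mathrm{base}$.
\item Use the existential choice of $\phi_\mathrm{aug}$ in the statement. Append to $\phi_\mathrm{aug}$ an extra output block equal to the unaugmented baseline map, e.g.\ $h_\mathrm{base}(x_{\mathrm{b},k},u_k)$ for S-SSI/S-DSI. Route it to $\hat y_k$ through $D_{yw}^\mathrm{a}$ and set $D_{yw}^\mathrm{b}=0$; for O-SSI/O-DSI, append $f_\mathrm{base}(x_{\mathrm{b},k},u_k)$ and route it through $B_w^\mathrm{a}$ with $B_w^\mathrm{b}=0$.
\end{itemize}
The second option is formally valid and leaves your well-posedness argument intact, since still only $D_{zw}^\mathrm{ba}$ is nonzero. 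It does, however, place baseline knowledge inside the learning component, so you should say so explicitly.
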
\vspace{-12pt}
\begin{pf}
    See Appendix~\ref{appendix:general_LFR_augmentation_forms}. \hfill $\blacksquare$
\end{pf}\vspace{-6pt}
% This theorem is proven by direct manipulation of the LFR matrix to show equivalence for the model augmentation structures; see Appendix~\ref{appendix:general_LFR_augmentation_forms}.
%
% For the model augmentation structures in Tables~\ref{tab:augmentation_structures}~and~\ref{tab:output_augmentation_structures}, we work out the equivalent interconnection representations in Appendix~\ref{appendix:Interconnection_representations}. Note that, this interconnection augmentation can represent all interconnections between $\phi_\text{base}$ and $\phi_\text{aug}$ as formulated by \eqref{eq:augmentation_structure}.
%
\vspace{-2mm}
\subsection{Computational graph of the interconnection}\label{sec:fixed_interconnection} \vspace{-1mm}
The full parameterisation of $W$ in the proposed LFR-based model augmentation structure gives a general representation of \eqref{eq:augmentation_structure}. However, due to full parameterisation, the actual interconnection of the learning and baseline model components is represented in a black-box fashion compared to the cases listed in Tables~\ref{tab:augmentation_structures} and \ref{tab:output_augmentation_structures}.

% \OL{Motivation in three factors:
% \begin{enumerate}
%     \item Sparsity in interconnection structure
%     \item Well-posedness proof
%     \item Initialisation algorithm
% \end{enumerate}
% }

To be able to detect or even enforce a particular configuration of these components, we investigate the computational graph of the augmentation interconnection \eqref{eq:augmentation_structure}. This graph makes clear what each edge, i.e., element of $\theta_\text{LFR}$, does in the augmentation structure, and thus also what removing it for sparsification will do. Additionally, the graph representation will provide general conditions on the well-posedness of the proposed model structure in Section~\ref{sec:well_posedness}.

First, we introduce computational graphs, then present the graph representations of the baseline model and the learning component, which are finally combined into the LFR model graph. Here we make use of the following graph notions: $G = (V,E)$ is a graph with nodes $V$ and edges $E$, $\text{deg}^-$ is the indegree of a node, $\text{deg}^+$ is the outdegree of a node, $\text{disjoint}\left(V,W\right)$ is the disjoint union of set between two sets $V$ and $W$ defined as $V \cup W$ with $V \cap W = \emptyset$, $\mathrm{source}\left(W\right)$ defines a set of nodes $V = \left\{v \in W | \indeg{v} = 0\right\}$, and $\mathrm{sink}\left(W\right)$ defines a set of nodes $V = \left\{v \in W | \outdeg{v} = 0\right\}$. We also introduce the shorthand notation $\mathrm{outvar}\left(V, W\right)$ to define the set of edges $E_{vw} = \left\{\left(v,w\right)| v \in V, w\in W, \outdeg{v} = \indeg{w} = 1 \right\}$, and $G_c = (V_c, E_c) = G \setminus V_s$ for vertex contraction with $G = (V, E)$, $V_s \subset V$, $V_c = V \setminus V_s \cup v_c$ and $v_c$ being the contracted vertex.
\begin{figure}
    \centering
    \begin{subfigure}{0.5\textwidth}
        \centering
        \includegraphics[width=.8\textwidth]{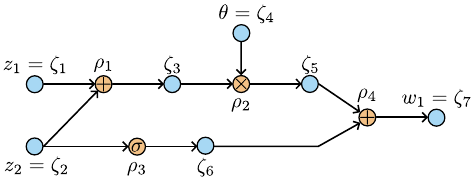} \vspace{-1mm}
        \caption{Full computational graph}
        \label{fig:comp_graph_full}
    \end{subfigure}%
    \vskip\baselineskip
    
    \begin{subfigure}{0.5\textwidth}
        \centering
        \includegraphics[width=0.4925\textwidth]{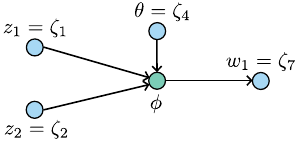} \vspace{-1mm}
        \caption{Computational graph with subgraph $\phi$}
        \label{fig:comp_subgraph}
    \end{subfigure}
    \vspace*{-18pt}
    \caption{Computational graph representation of the function $w_1 = \phi\left(\theta, z_1, z_2\right) = \left(z_1 + z_2\right)\theta + \sigma\left(z_2\right)$.}
    \label{fig:comp_graph} \vspace{-1mm}
\end{figure}

\subsubsection{Computational graphs}
We take the formulation of the \emph{computational graph} as defined in \cite{bauer1974computational}. First, take the \emph{computational problem} defined as the set of $M$ functions \vspace{-1mm}
\begin{equation}\label{eq:computational_problem}
    w_i = \phi_i = \left(z_1, \ldots ,z_K\right), \quad i=1,\ldots,M, \vspace{-1mm}
\end{equation}
of $K$  real variables $z_1, \ldots, z_K \in \mathbb{R}$ from which $M$ real quantities $w_i \in \mathbb{R}$ are obtained. Each function $\phi_i: \mathbb{R}^K \to \mathbb{R}$ can be described by a \emph{computational process} consisting of a number of primitive operators \vspace{-1mm}
\begin{equation}\label{eq:computational_process}
    \zeta_{i_{c,0}} := \rho_c\left(\zeta_{i_{c,1}}, \ldots, \zeta_{i_{c,s}}\right), \quad c = 1, \ldots, L \vspace{-1mm}
\end{equation}
where the values $\zeta_{i_{c,j}} \in \mathbb{R}$ and $\rho_c: \mathbb{R}^s \to \mathbb{R}$. For all $j\in[1, \ldots, s]$, $\zeta_{i_{c,j}} \in [z_1, \ldots, z_K,\zeta_{i_{b,0}}]$ with $b < c$, i.e., the inputs to $\rho_c$ can be the input of $\phi_i$ or output from a previous operator $\rho_b$. For $j=0$, $\zeta_{i_{c,j}} \in [w_1, \ldots, w_M,\zeta_{i_{b,1},\ldots,\zeta_{i_{b,s}}}]$ with $b > c$, i.e., the output of $\rho_c$ can be the output of $\phi_i$ or input to a next operator $\rho_b$.
% The output of an operator $\rho_a$ can be the input of the next operator $\rho_b$, e.g., $\zeta_{i_{a,0}} = \zeta_{i_{b,1}}$, and a value $\zeta_{i_{c,j}}$ can be input to multiple operators $\rho_c$.
% We consider primitive operators $\rho_c$ to be addition, multiplication, or a nonlinear one-to-one function. Thus, an operator $\rho_c$ can have one or two inputs, e.g., $s = 1 \vee 2$.
% We also note that $w_i$ is represented by a $\zeta_{i_{c,0}}$ and $z_1, \ldots, z_K$ are represented by some $\zeta_{i_{c,j}}$.
%
We consider primitive operators $\rho_c : \mathbb{R}^{s} \to \mathbb{R}$, with 
$s \in \{1,2\}$, to be either: addition, multiplication or a nonlinear injective function (where $s = 1$).
% \begin{itemize}
%     \item addition,
%     \item multiplication, or
%     \item a nonlinear injective function (where $s = 1$).
% \end{itemize}

The \emph{computational process} defines and it is characterised by a \emph{computational graph}, where both $\rho_c$ and $\zeta_{i,_{c,j}}$ are nodes in the graph. This computational graph is defined as follows:

% This \emph{computational graph} is a directed graph where the values $\zeta$ are nodes $V_\zeta$ with in-degree of at most 1, and each function $\rho$ of $s$ arguments corresponds to one node $V_\rho$ with in-degree $s$ and out-degree of 1 connecting to a node $V_\zeta$. The edges from $V_\rho$ to $V_\zeta$ are defined as
% \begin{equation}
%     E_{\rho\zeta} = \left\{\left(\rho,\zeta\right) | \rho\in V_\rho, \zeta \in V_\zeta \text{ and } \right\}
% \end{equation}
% and the edges from $V_\zeta$ to $V_\rho$ are defined as
% \begin{equation}
%     E_{\zeta\rho} = \left\{\left(\zeta,\rho\right) | \zeta \in V_\zeta, \rho\in V_\rho \text{ and } \frac{\rho_c}{\partial \zeta_j} \neq 0\right\}
% \end{equation}

\begin{defn}[Computational graph]\label{definition:computational_graph}
A directed graph denoted by $G = \left(V,E\right)$ with $V=\text{disjoint}\left(V_\zeta,V_\rho\right)$ the set of vertices and $E = \text{disjoint}\left(E_{\zeta\rho},E_{\rho\zeta}\right)$ the set of edges is called a computational graph if %such that 
\vspace{-1mm}
% \begin{enumerate}[label=(\alph*)]
%     \item\label{th:first} $V_\rho = \{\rho_1, \ldots, \rho_L\}$
%     \item\label{th:second} $V_\zeta = \{\zeta_1, \ldots, \zeta_K\}$
%     \item\label{th:source} $V_z = \left\{z \in V_\zeta | \indeg{z} = 0\right\}$
%     \item\label{th:sink} $V_w = \left\{w \in V_\zeta | \outdeg{w} = 0\right\}$
%     \item\label{th:inputs} $E_{\zeta\rho} = \left\{\left(\rho,\zeta\right)| \rho \in V_\rho, \zeta\in V_\zeta \text{ and } \frac{\partial \rho}{\partial \zeta} \neq 0\right\}$
%     \item\label{th:function_map} $E_{\rho\zeta} = \left\{\left(\rho,\zeta\right)| \rho \in V_\rho, \zeta\in V_\zeta \setminus V_z, \outdeg{\rho} = \indeg{\zeta} = 1 \right\}$
%     % \item \textcolor{red}{$\forall \rho \in V_\rho, \exists! \zeta \in V_\zeta \setminus V_z \text{ s.t. } \left(\rho, \zeta\right) \in E_{\rho \zeta} \enspace \Rightarrow \outdeg{\rho} = 1$.}
% \end{enumerate}
\begin{enumerate}[label=(\alph*)]
    \item\label{th:first} $V_\rho = \{\rho_1, \ldots, \rho_L\}$, $V_\zeta = \{\zeta_1, \ldots, \zeta_K\}$
    % \item\label{th:second} $V_\zeta = \{\zeta_1, \ldots, \zeta_K\}$
    \item\label{th:source} $V_z = \mathrm{source}\left(V_\zeta\right)$, $V_w = \mathrm{sink}\left(V_\zeta\right)$
    % \item\label{th:source} $V_z = \left\{z \in V_\zeta | \indeg{z} = 0\right\}$, $V_w = \left\{w \in V_\zeta | \outdeg{w} = 0\right\}$
    % \item\label{th:source} $V_z = \left\{z \in V_\zeta | \indeg{z} = 0\right\}$
    % \item\label{th:sink} $V_w = \left\{w \in V_\zeta | \outdeg{w} = 0\right\}$
    \item\label{th:inputs} $E_{\zeta\rho} = \left\{\left(\rho,\zeta\right)| \rho \in V_\rho, \zeta\in V_\zeta \text{ and } \frac{\partial \rho}{\partial \zeta} \neq 0\right\}$
    % \item\label{th:function_map} $E_{\rho\zeta} = \left\{\left(\rho,\zeta\right)| \rho \in V_\rho, \zeta\in V_\zeta \setminus V_z, \outdeg{\rho} = \indeg{\zeta} = 1 \right\}$
    \item\label{th:function_map} $E_{\rho\zeta} = \mathrm{outvar}\left(V_\rho, V_\zeta \setminus V_z \right)$
    % \item \textcolor{red}{$\forall \rho \in V_\rho, \exists! \zeta \in V_\zeta \setminus V_z \text{ s.t. } \left(\rho, \zeta\right) \in E_{\rho \zeta} \enspace \Rightarrow \outdeg{\rho} = 1$.}
\end{enumerate}
\end{defn}
%
% \ref{th:first} and \ref{th:second} define the sets of operator and variable nodes respectively. \ref{th:source} and \ref{th:sink} define subsets of the variables representing the inputs and outputs respectively (i.e., sources and sinks in graph nomenclature). Lastly, \ref{th:inputs} defines what variable nodes are inputs to which operators, and \ref{th:function_map} defines the connection of the output variables for each operator. In Fig.~\ref{fig:comp_graph_full} we show the computational graph for the example function $w_1 = \phi\left(\theta, z_1, z_2\right) = \left(z_1 + z_2\right)\theta + \sigma\left(z_2\right)$.
\ref{th:first} defines the sets of operator and variable nodes respectively. \ref{th:source} defines subsets of the variables representing the inputs and outputs respectively (i.e., sources and sinks in graph nomenclature). Lastly, \ref{th:inputs} defines what variable nodes are inputs to which operators, and \ref{th:function_map} defines the connection of the output variables for each operator. In Fig.~\ref{fig:comp_graph_full} we show the computational graph for the example function $w_1 = \phi\left(\theta, z_1, z_2\right) = \left(z_1 + z_2\right)\theta + \sigma\left(z_2\right)$.

By defining a subgraph $G_{\phi_i}$, we can formulate a more compact notation. The computational subgraph $G_{\phi_i}$, contains the nodes $V_{\phi_i} =  \left(V_{\zeta_i} \cup V_{\rho_i} \right) \setminus V_{w_i} \setminus V_{z_i}$ and the edges internal to these nodes. Then by vertex contraction $G \setminus V_{\phi_i}$ we get the contracted noted $v_{\phi_i}$ that represents the multivariate function $\phi_i$ in \eqref{eq:computational_problem}. Such a contracted node $v_{\phi_i}$ is shown for the example function in Fig.~\ref{fig:comp_subgraph}. 
% The compacted computational graph is now described as $G_{\phi_i} = \left(V_{\phi_i}, E_{\phi_i}\right)$, with $V_{\phi_i} =\text{disjoint}\left(V_w, V_z, \phi_i\right)$ and $E_{\phi_i}=\text{disjoint}\left(E_{z \phi_i}, E_{\phi_i w}\right)$
% This graph will have the nodes $V_z$, $V_w$ and the new node $\phi$. It will have the following edges
% \begin{equation}
%     E_{\phi w} = \left\{\left(\rho,\zeta\right)| \rho \in V_\rho, \zeta\in V_\zeta \setminus V_z, \outdeg{\rho} = \indeg{\zeta} = 1 \right\}
% \end{equation}
% \begin{equation}
%     E_{z \phi} = \left\{\left(\rho,\zeta\right)| \rho \in V_\rho, \zeta\in V_\zeta \setminus V_z, \outdeg{\rho} = \indeg{\zeta} = 1 \right\}
% \end{equation}
For each function $\phi_i$ in the computational problem, a computational graph $G_{\phi_i}$ can be defined. By taking the union of these graphs as
\begin{equation}\label{eq:graph_union}
    G_\phi = \left(V_{\phi}, E_{\phi}\right) = \left(V_{\phi_1} \cup \ldots \cup V_{\phi_M}, E_{\phi_1} \cup \ldots \cup E_{\phi_M}\right)
\end{equation}
we obtain the computational graph $G_\phi$ of the entire computational problem $\eqref{eq:computational_problem}$ with $K$ inputs and $M$ outputs, as visualised in Fig.~\ref{fig:comp_graph_Phi}. Note that due to the union, %describes that the 
nodes and edges may be shared between the computational graphs $G_{\phi_i}$.

\vspace{-1mm}
\subsubsection{Graph of baseline model}\label{sec:baseline_graph}
The baseline model \eqref{eq:baseline_model} can be represented with a computational graph as %follows. We describe $\zbasek = \mathrm{vec}(\xbasek,\uk)$ and $\theta_\text{base}$ as input variables, and $\wbasek = \mathrm{vec}(\xbasekplus,\yestk)$ as output variables. Then the computational problem can be described as
\vspace*{-15pt}
\begin{multline}
    \begin{bmatrix}
        w_{b,1} \\
        \vdots \\
        w_{b,n_{x_b}} \\
        w_{b,n_{x_b} + 1} \\
        \vdots \\
        w_{b,n_{x_b} + n_{y}}
    \end{bmatrix} = 
    \begin{bmatrix}
        \phi_{\mathrm{base}_1}(\theta_{\mathrm{base},1},z_{\text{b},k,1}) \\
        \vdots \\
        \phi_{\mathrm{base}_{n_{x_b}}}(\theta_{\mathrm{base},n_{x_b}},z_{\text{b},k,n_{x_b}}) \\
        \phi_{\mathrm{base}_{n_{x_b} + 1}}(\theta_{\mathrm{base},n_{x_b}+1},z_{\text{b},k,n_{x_b}+1}) \\
        \vdots \\
        \phi_{\mathrm{base}_{n_{x_b} + n_{y}}}(\theta_{\mathrm{base},n_{x_b}+n_y},z_{\text{b},k,n_{x_b}}+n_y)
    \end{bmatrix} \\
    = \begin{bmatrix}
        f_\mathrm{base}(\theta_\mathrm{base},\zbasek)\\
        h_\mathrm{base}(\theta_\mathrm{base},\zbasek)
    \end{bmatrix},
\end{multline}
\vskip -5mm
where we describe $\zbasek = \mathrm{vec}(\xbasek,\uk)$ and $\theta_\text{base}$ as input variables, $\wbasek = \mathrm{vec}(\xbasekplus,\yestk)$ as output variables, and
 $\theta_{\mathrm{base},1}, \ldots, \theta_{\mathrm{base},n_{x_b}+n_y} \in \theta_{\mathrm{base}}$ as the parameters (i.e., the parameters may be shared between functions $\phi_{\mathrm{base}_i}$), and similarly $z_{\text{b},k,1}, \ldots, z_{\text{b},k,n_{x_b}+n_y} \in \zbasek$.

\begin{figure}
    \centering
    \includegraphics[width=0.2525\textwidth]{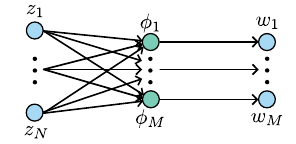}\vspace{-2mm}
    \caption{Computational graph for a computation problem consisting of $M$ functions $\phi_1 \ldots \phi_M$.}
    \label{fig:comp_graph_Phi} \vspace{-1.5mm}
\end{figure}
Which subset of $\theta_{\mathrm{base}}$ and $\zbasek$ are input to each function $\phi_{\mathrm{base}_i}$, which follows from the baseline model \eqref{eq:baseline_model} considered. After deriving a computational graph $G_{\phi_{\mathrm{base}_i}}$ for each function $\phi_{\mathrm{base}_1}, \ldots, \phi_{\mathrm{base}_{n_{x_b} + n_{y}}}$, the computational graph $G_{\phi_\mathrm{base}}$ is derived by the graph union as in \eqref{eq:graph_union}, with the resulting nodes $V_{z_\mathrm{b}}$, $V_{w_\mathrm{b}}$ and $V_{\phi_\text{base}}$. Notably, $E_{\zbase \phi_\text{base}}$ is sparse, e.g., the number of edges between $V_{z_\mathrm{b}}$ and $V_{\phi_\text{base}}$ is smaller than would be permissible by any arbitrary chosen baseline model $\eqref{eq:baseline_model}$.

% \textcolor{red}{
% The adjacency matrix $P_\text{base}$ connecting the input nodes $z_b$ to the functions $\phi_\text{base}$ can alternatively be obtained by
% \begin{equation}
%  P_{\text{base},oi} =
% \begin{cases}
% 1 & \text{if } \frac{\partial \phi_{\text{base},o}}{\partial z_{b,i}} \neq 0\\
% 0 & \text{otherwise}.
% \end{cases}
% \end{equation}
% }

\subsubsection{Graph of the learning component}\vspace{-1mm}
Similarly as for the baseline model, we describe the learning function graph for the computational problem as \vspace{-1.5mm}
\begin{equation*}
    \begin{bmatrix}
        \hspace{-5pt}w_1 \\
        \vdots \\
        w_{n_{w_a}} \hspace*{-3pt}
    \end{bmatrix} \! \hspace*{-2pt} = \hspace*{-2pt} \!  
    \begin{bmatrix}
        \hspace*{-4pt} \phi_{\mathrm{aug}_1}(\theta_{\mathrm{aug},1},z_{\text{a},k,1}) \\
        \vdots \\
        \phi_{\mathrm{aug}_{n_{w_a}}}(\theta_{\mathrm{aug},n_{w_a}},z_{\text{a},k,n_{w_a}}) \hspace*{-2pt}
    \end{bmatrix} \hspace*{-3pt}
    = \hspace*{-1pt} \phi_\text{aug}(\theta_\mathrm{aug}, \zaugk), \vspace{-1.5mm}
\end{equation*}
Then the computational graph $G_{\phi_\mathrm{aug}}$ is derived as before, with the resulting nodes $V_{z_\mathrm{a}}$, $V_{w_\mathrm{a}}$ and $V_{\phi_\text{aug}}$.

% One-to-one functions represent activation functions. The parameters $\theta$ can again be represented by additional input nodes $\zeta$.
% %
% \begin{align}
%     \waugk = \phi_\mathrm{aug}(\theta_\mathrm{aug}, \zaugk) &= \mathcal{N}(\theta_\mathrm{aug}, \zaugk),
% \end{align}
% %
% This leaves the input nodes $z_{a,1}, \ldots, z_{a,n_a}$ and the output nodes $w_{a,1}, \ldots, w_{a,n_a}$.

\subsubsection{LFR model graph}\vspace{-1mm}
With the baseline model graph $G_{\phi_\text{base}}$ and the learning component graph $G_{\phi_\text{aug}}$, we can define the interconnection structure between these graphs and the signals $\hat x_k$, $\uk$, $\hat x_{k+1}$ and $\yestk$ of the model. For ease of notation, we will leave out the $\theta$ nodes, as these are not altered any further. We start by defining how the model signals are matched to $G_{\phi_\text{base}}$ and $G_{\phi_\text{aug}}$, by taking the following parameterised summation \vspace{-3mm}
\begin{equation}\label{eq:sum_node_equation}
    \varsigma = \sum_{i=1}^{\indeg{\varsigma}}\theta_i v_i, \vspace{-2mm}
\end{equation}
where $\varsigma \in \text{disjoint}\left(V_{x^+},V_{y},V_{\zbase},V_{\zaug}\right)$ and $v_i \in \text{disjoint}\left(V_{x},V_{u},\right.$ $\left.V_{\wbase},V_{\waug}\right)$ and $\theta_i$ is the weight of the summation. These summation operations are then represented by the summation nodes $V_\varsigma$. Then the interconnect graph can be defined as

\begin{defn}[Interconnection graph]\label{definition:interconnection_graph}
The computational graph of the interconnection between $G_{\phi_\text{base}}$ and $G_{\phi_\text{aug}}$ is a directed graph denoted by $G_\text{LFR} = \left(V_\text{LFR},E_\text{LFR}\right)$, where $V_\text{LFR}= \text{disjoint}\left(V_{x^+},V_{y},V_{\zbase},V_{\zaug},V_{x},V_{u},V_{\wbase},V_{\waug}\right)$ and $E_\text{LFR}= \text{disjoint}\left(E_{\varsigma_b \zbase}, E_{\varsigma_a \zaug}, E_{\varsigma_x x^+}, E_{\varsigma_y y}, E_{\phi_\text{base}}, E_{\phi_\text{aug}}\right)$
\begin{enumerate}[label=(\alph*)]
    \item\label{th:a} $V_{\varsigma_b} = \{\varsigma_1, \ldots, \varsigma_{n_{\zbase}}\}$, $V_{\varsigma_a} = \{\varsigma_1, \ldots, \varsigma_{n_{\zaug}}\}$, \\ $V_{\varsigma_x} = \{\varsigma_1, \ldots, \varsigma_{n_{x}}\}$, $V_{\varsigma_y} = \{\varsigma_1, \ldots, \varsigma_{n_{y}}\}$
    \item\label{th:b} $E_{\varsigma_b \zbase} = \mathrm{outvar}\left(V_{\varsigma_b}, V_{\zbase}\right)$, $E_{\varsigma_a \zaug} = \mathrm{outvar}\left(V_{\varsigma_a}, V_{\zaug}\right)$ \\
    $E_{\varsigma_x x^+} = \mathrm{outvar}\left(V_{\varsigma_x}, V_{x^+}\right)$, $E_{\varsigma_y y} = \mathrm{outvar}\left(V_{\varsigma_y}, V_{y}\right)$
    % \item\label{th:b} $E_{\varsigma_b \zbase}\!=\!\left\{\left(\varsigma_b, \zbase\right)| \varsigma_b \in V_{\varsigma_b}, \zbase \in V_{\zbase}, \outdeg{\varsigma_b} = \indeg{ \zbase} = 1 \right\}$
    % \item\label{th:c} $E_{\varsigma_a \zaug} = \left\{\left(\varsigma_a, \zaug\right)| \varsigma_a \in V_{\varsigma_a}, \zaug \in V_{\zaug}, \outdeg{\varsigma_a} = \indeg{ \zaug} = 1 \right\}$
    % \item\label{th:d} $E_{\varsigma_x x^+} = \mathrm{outvar}\left(V_{\varsigma_x}, V_{x^+}\right)$, $E_{\varsigma_y y} = \mathrm{outvar}\left(V_{\varsigma_y}, V_{y}\right)$
    % \item\label{th:d} $E_{\varsigma_x x^+} = \left\{\left(\varsigma_x, x^+\right)| \varsigma_x \in V_{\varsigma_x}, x^+ \in V_{x^+}, \outdeg{\varsigma_x} = \indeg{x^+} = 1 \right\}$
    % \item\label{th:e} $E_{\varsigma_y y} = \left\{\left(\varsigma_y, y\right)| \varsigma_y \in V_{\varsigma_y}, y \in V_{y}, \outdeg{\varsigma_y} = \indeg{y} = 1 \right\}$
    % \item\label{th:source_LFR} $\forall v \in V_x \cup V_u, \indeg{v} = 0$
    % \item\label{th:sink_LFR} $\forall v \in V_{x^+} \cup V_y, \outdeg{v} = 0$
    \item\label{th:source_LFR} $\forall v \in V_x \cup V_u, \indeg{v} = 0$, $\forall v \in V_{x^+} \cup V_y, \outdeg{v} = 0$
    % \item\label{th:source_LFR} $V_{xu} = \mathrm{source}\left(V_x \cup V_u\right)$, $V_{x^+y} = \mathrm{sink}\left(V_{x^+}  \cup V_y\right)$, 
    % \item $\forall v \in \text{disjoint}\left(V_{w_b}, V_{w_a}, V_{x_{k+1}}, V_y\right), \enspace \exists! \varsigma \in V_\varsigma, \left(\varsigma v\right) \in E_{\varsigma, v}$
    \item\label{th:connections_LFR} $E_{v \varsigma} = \left\{\left(v,\varsigma\right)| v \in \text{disjoint}\left(V_{z_b}, V_{z_a}, V_{x_k}, V_{u_k}\right), \varsigma\in V_\varsigma\right\}$
\end{enumerate}
\end{defn}
\ref{th:a} defines the summation nodes for the baseline, augmentation, state, and output nodes and \ref{th:b} defines the one-to-one outgoing edges from these summation nodes to the respective output nodes $\rho$. \ref{th:source_LFR} defines existing nodes as inputs and outputs respectively (i.e., sources and sinks). Finally, \ref{th:connections_LFR} defines the edges between the state, input, baseline, and augmentation and the summation nodes.

\begin{figure}
    \centering
    \includegraphics[width=0.45\textwidth]{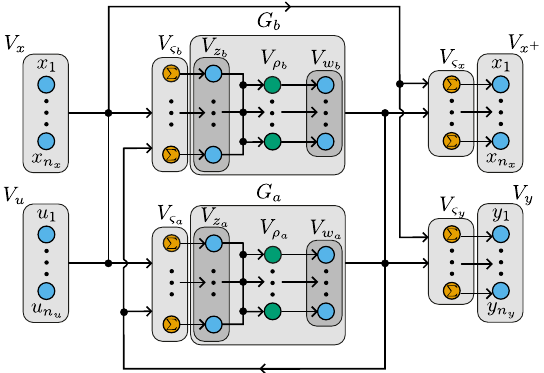}
    \caption{Computational graph of the interconnection of $G_b$ and $G_a$.}
    \label{fig:comp_graph_Phi} \vspace{-1mm}
\end{figure}
By applying vertex contraction with the nodes $\varsigma_i \in V_{\varsigma_i}$ and the nodes to which the corresponding outgoing edge of $\varsigma$ leads, either $V_z$, $V_{x^+}$ or $V_{y}$, on on $G_\text{LFR}$, and also applying vertex contraction with node sets $V_{\rho_b}$ and $V_{\rho_a}$ on $G_\text{LFR}$, we retrieve a graph with only the variable nodes. Considering $P_i$ to be the adjacency matrix for the edge set $E_i$, then the adjacency matrix for the interconnected graph is  \vspace{-4mm}
\begin{equation} \label{eq:adjacency_matrix}
    P = \begin{bNiceMatrix}[first-row,first-col] & \hat{x}_k & u_k & \zbase & \zaug & \wbase & \waug & \hat{x}_{k+1} & \hat{y}_k \\ \hat{x}_k & 0 & 0 & 0 & 0 & 0 & 0 & 0 & 0 \\ u_k & 0 & 0 & 0 & 0 & 0 & 0 & 0 & 0 \\ \zbase & \mathit{P_{\zbase x}} & \mathit{P_{\zbase u}} & 0 & 0 & \mathit{P_{\zbase \wbase}} & \mathit{P_{\zbase \waug}} & 0 & 0 \\ \zaug & \mathit{P_{\zaug x}} & \mathit{P_{\zaug u}} & 0 & 0 & \mathit{P_{\zaug \wbase}} & \mathit{P_{\zaug \waug}} & 0 & 0 \\ \wbase & 0 & 0 & P_\mathrm{b} & 0 & 0 & 0 & 0 & 0 \\ \waug & 0 & 0 & 0 & P_\mathrm{a} & 0 & 0 & 0 & 0 \\ \hat{x}_{k+1} & \mathit{P_{x x}} & \mathit{P_{x u}} & 0 & 0 & \mathit{P_{x \wbase}} & \mathit{P_{x \waug}} & 0 & 0 \\ \hat{y}_k & \mathit{P_{y x}} & \mathit{P_{y u}} & 0 & 0 & \mathit{P_{y \wbase}} & \mathit{P_{y \waug}} & 0 & 0 \\ \end{bNiceMatrix}.  \vspace{-2mm}
\end{equation}
By considering all edge sets maximal, i.e., fully connected where allowed, the summations in \eqref{eq:sum_node_equation} recover the proposed LFR structure \eqref{eqs:general_LFR}. The augmentation structures in Tables~\ref{tab:augmentation_structures} and \ref{tab:output_augmentation_structures} correspond to sparse adjacency matrices. Thus, we can enforce or detect these augmentation structures by these sparse patterns in the adjacency matrix.

\subsection{Parametrisation of the learning component} \vspace{-1mm}
The proposed LFR-based augmentation structure allows for the use of any parameterised learning function $\phi_\text{aug}$ without loss of generality. For the remainder of this report, however, we will consider $\phi_\text{aug}$ to be parameterised by an ANN. An ANN with \( q \) hidden layers, each composed of \( m_i \) neurons and activation function \( \rho : \mathbb{R} \rightarrow \mathbb{R} \), is defined as \vspace{-1mm}
% \footnote{Consider that each hidden layer is composed of $m$ activation functions $\rho:\mathbb{R} \rightarrow \mathbb{R}$ in the form of $\xi_{i,j} = \rho(\sum_{l=1}^{m_{i-1}}\theta_{\mathrm{w},i,j,l} \xi_{i-1,l}+ \theta_{\mathrm{b},i,j})$ where $\xi_i=\mathrm{col}(\xi_{i,1},\ldots,\xi_{i,{m_i}})$  is the latent variable representing the output of layer $1\leq i\leq q$. Here, $\mathrm{col}(\centerdot)$ denotes the composition of a column vector. For a $\phi_\mathrm{aug}$ with $q$ hidden layers and linear input and output layers, this means that $\phi_\mathrm{aug}(\theta_\mathrm{aug},\zaugk)= \theta_{\mathrm{w},q+1} \xi_q(k) + \theta_{\mathrm{b},q+1}$ and $\xi_{0}(k)=\zaugk$.}
\begin{equation}
    \xi_{i,j} = \rho\left( \sum_{l=1}^{m_{i-1}} \theta_{\mathrm{w},i,j,l} \, \xi_{i-1,l} + \theta_{\mathrm{b},i,j} \right)   \vspace{-2mm}
\end{equation}
where $\xi_i=\mathrm{col}(\xi_{i,1},\ldots,\xi_{i,{m_i}})$  is the latent variable representing the output of layer $1\leq i\leq q$, \( \theta_{\mathrm{w},i,j,l} \) and \( \theta_{\mathrm{b},i,j} \) are the weight and bias parameters of the network. Here, $\mathrm{col}(\centerdot)$ denotes the composition of a column vector.
For a $\phi_\mathrm{aug}$ with $q$ hidden layers and linear input and output layers, this gives   \vspace{-1.5mm}
\begin{subequations}
    \begin{align}
        \phi_\mathrm{aug}(\theta_\mathrm{aug},\zaugk) & = \theta_{\mathrm{w},q+1} \xi_q(k) + \theta_{\mathrm{b},q+1} \\
        \xi_{0}(k) & =\zaugk
    \end{align}    
\end{subequations}
\vskip -4mm
This can be extended to residual neural network (ResNet) \cite{he2016deep}, which adds a linear bypass given as \vspace{-1.5mm}
\begin{subequations} \label{eq:resnet}
    \begin{align}
        \phi_\mathrm{aug}(\theta_\mathrm{aug},\zaugk) & = \theta_{\mathrm{w},q+1} \xi_q(k) + \theta_{\mathrm{b},q+1} + W_a \zaugk \\
        \xi_{0}(k) & =\zaugk
    \end{align}    
\end{subequations}
\vskip -4mm
where $W$ is a parameterised residual weight matrix of appropriate dimensions. The additional linear bypass can capture unknown linear dynamics in the learning functions. This is preferable over capturing these dynamics through nonlinear activation functions, since this is more likely to extrapolate better and has shown better learning performance in similar settings\cite{Schoukens2021}. It additionally allows for stable initialisations of the proposed LFR structure as discussed in Section~\ref{sec:parameter_init}.

This finalises the structure of the fully parameterised general augmentation structure of \eqref{eq:augmentation_structure}. We continue with providing well-posedness conditions and an identification algorithm.

% \begin{rem}
%     As the LFR-based augmentation structure \eqref{eqs:general_LFR} is capable of capturing the unknown linear dynamics in the LFR matrix $\mathbf{W}$, ResNET augmentations are not required to capture the unknown linear dynamics.
% \end{rem}

\vspace{-1mm}
\section{LFR structure well-posedness} \label{sec:well_posedness}\vspace{-1mm}
We now propose conditions under which Property~\ref{condition:well-posedness} is guaranteed.
To derive conditions, we rewrite \eqref{eq:LFR_zb_za} into \vspace{-1.5mm}
\begin{equation}\label{eq:start_elimination}
    \underbrace{\left[\begin{array}{c}
        \zbasek \\
        \zaugk \\
        \end{array}\right] - \mathit{D}_{zw}
    \phi\left(\zbasek, \zaugk \right)}_{v\left(\zbasek,
        \zaugk\right)} = 
        \left[\begin{array}{cc}
        C_z^\mathrm{b} & D_{zu}^\mathrm{b}\\
        C_z^\mathrm{a} & D_{zu}^\mathrm{a}
    \end{array}\right]
    \left[\begin{array}{c}
        \hat x_k \\
        u_k \\
    \end{array}\right], \vspace{-2.5mm}
\end{equation}
where $\text{col}\left(\phi_\text{base}\left(\theta_\text{base}, \zbasek\right),\phi_\text{aug}\left(\theta_\text{aug}, \zaugk\right)\right) = \phi\left(\zbasek, \zaugk \right)$. Then, if the inverse $v^{-1}\left(\zbasek, \zaugk\right)$ exists, \vspace{-2mm}
\begin{equation}
    \left[\begin{array}{c}
        \zbasek \\
        \zaugk \\
    \end{array}\right] = v^{-1}\left(\left[\begin{array}{cc}
        C_z^\mathrm{b} & D_{zu}^\mathrm{b}\\
        C_z^\mathrm{a} & D_{zu}^\mathrm{a}
    \end{array}\right]
    \hspace{-2pt}\left[\begin{array}{c}
        \hat x_k \\
        u_k \\
    \end{array}\right]\right). \vspace{-2.5mm}
\end{equation}
Substitution into \eqref{eqs:state_output_LFR} then eliminates \eqref{eq:LFR_zb} and \eqref{eq:LFR_za}.
% , giving
%
% \begin{multline}\label{eq:elimination}
%     \left[\begin{array}{c}
%     \hat x_{k+1} \\
%     \hat y_k \\
%     \end{array}\right] =
%     \left[\begin{array}{cc}
%     A & B_u\\
%     C_y & D_{yu}
%     \end{array}\right]
%     \left[\begin{array}{c}
%         \hat x_k \\
%         u_k \\
%     \end{array}\right] + \\
%     \left[\begin{array}{cc}
%     B_w^\mathrm{b} & B_w^\mathrm{a}\\
%     D_{yw}^\mathrm{b} & D_{yw}^\mathrm{a}
%     \end{array}\right]
%     \phi \circ v^{-1}\left(\left[\begin{array}{cc}
%         C_z^\mathrm{b} & D_{zu}^\mathrm{b}\\
%         C_z^\mathrm{a} & D_{zu}^\mathrm{a}
%     \end{array}\right]
%     \hspace{-2pt}\left[\begin{array}{c}
%         \hat x_k \\
%         u_k \\
%     \end{array}\right]\right),
% \end{multline}
% where $\circ$ stands for function composition.
The existence of the inverse $v^{-1}\left(\zbasek, \zaugk\right)$ can be guaranteed by the following theorem. For this, we introduce the notation $C^n$ to denote the class of functions whose derivatives up to order $n$ exist and are continuous.
\begin{thm}[Hadamard's global inverse function \cite{krantz2002implicit}]\label{thm:Hadamard}
    Let the function $v(z): \mathbb{R}^N \to \mathbb{R}^N$ be a $C^2$ mapping. Suppose that the determinant of the Jacobian $\det\left(D v(z)\right) \neq 0, \forall z\in\mathbb{R}^N$. In addition, suppose that $v(z)$ is proper, i.e., $\|v(z)\|^2_2 \to \infty$ as $\|z\|^2_2 \to \infty$. Then, there exists an inverse function \( v^{-1}(z) \) such that \( v(v^{-1}(z)) = z \) and \( v^{-1}(v(z)) = z \).

\end{thm}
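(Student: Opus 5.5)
The plan is to prove the statement topologically: first show that $v$ is a finite-sheeted covering map of $\mathbb{R}^N$ onto itself, then use simple connectivity of $\mathbb{R}^N$ to conclude that it has exactly one sheet. The two hypotheses enter separately. The Jacobian condition gives local invertibility. Properness gives global control: it rules out the image "escaping to infinity", which is how a local diffeomorphism can fail to be onto.

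\emph{Step 1 (local structure).} Since $v$ is $C^2$ and $\det(Dv(z))\neq 0$ for every $z$, the classical inverse function theorem shows that $v$ is a local $C^2$ diffeomorphism. Hence $v$ is an open map, and every fibre $v^{-1}(y)$ is a discrete subset of $\mathbb{R}^N$.

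\emph{Step 2 (properness and surjectivity).} The condition $\|v(z)\|\to\infty$ as $\|z\|\to\infty$ implies that preimages of bounded sets are bounded. Preimages of closed sets are closed by continuity, so $v^{-1}(K)$ is compact for every compact $K$. From this, $v$ is a closed map: take $z_j\in F$ with $F$ closed and $v(z_j)\to y$; the $z_j$ stay bounded, so a subsequence converges to some $z\in F$, and $v(z)=y$. The image $v(\mathbb{R}^N)$ is therefore both open and closed and nonempty. Since $\mathbb{R}^N$ is connected, $v$ is surjective. In addition, each fibre is compact and discrete, hence finite.

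\emph{Step 3 (covering property).} Fix $y$ and write $v^{-1}(y)=\{x_1,\dots,x_m\}$. Choose pairwise disjoint open sets $U_i\ni x_i$ that $v$ maps diffeomorphically onto neighbourhoods $V_i$ of $y$.

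I claim there is a neighbourhood $V\subset\bigcap_i V_i$ of $y$ with $v^{-1}(V)\subset\bigcup_i U_i$. Suppose not. Then there are points $z_j\notin\bigcup_i U_i$ with $v(z_j)\to y$. By properness the $z_j$ are bounded, so a subsequence converges to some $z\notin\bigcup_i U_i$ with $v(z)=y$. This contradicts the fact that the whole fibre lies in $\bigcup_i U_i$.

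With such a $V$, $v^{-1}(V)$ is the disjoint union of the sets $U_i\cap v^{-1}(V)$, each mapped homeomorphically onto $V$. So $v$ is a covering map. The number of sheets is locally constant, hence constant on the connected space $\mathbb{R}^N$.

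\emph{Step 4 (one sheet).} $\mathbb{R}^N$ is simply connected, and a connected covering space of a simply connected space is trivial, so $v$ is a homeomorphism. To be concrete, if $v(a)=v(b)$, consider the straight segment from $a$ to $b$. Its image is a loop at $v(a)$, which is null-homotopic in $\mathbb{R}^N$. By the homotopy lifting property, the lifts of this loop starting at $a$ have a common endpoint, and the lift of the constant loop ends at $a$, so $b=a$. Thus $v$ is bijective, and the inverse $v^{-1}$ satisfies $v\circ v^{-1}=\mathrm{id}$ and $v^{-1}\circ v=\mathrm{id}$. By Step 1, $v^{-1}$ is locally given by the local inverses, so it is even $C^2$.

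I expect Step 3 to be the main obstacle. Establishing that $v$ is evenly covered requires using properness carefully, in the sequence-compactness argument above. Once the covering property is in hand, the monodromy argument in Step 4 is standard.
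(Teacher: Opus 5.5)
Your proof is correct and complete. The inverse function theorem makes $v$ a local diffeomorphism. Norm-coercivity gives compact preimages of compact sets, hence closedness of $v$, surjectivity and finite fibres. Your sequential-compactness argument in Step 3 correctly produces an evenly covered neighbourhood, and the monodromy argument in Step 4 forces a single sheet.

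The paper gives no proof of this statement to compare against. It is quoted from \cite{krantz2002implicit} and used as a black box in the well-posedness theorem. Your route is the standard proof of Hadamard's theorem: a proper local homeomorphism of $\mathbb{R}^N$ is a finite covering map, and a covering of a simply connected space is trivial.

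Two small remarks. First, your argument only needs $v\in C^1$, since the $C^2$ hypothesis enters only through the inverse function theorem, and it proves more than stated, namely that $v^{-1}$ is itself $C^2$. Second, Step 4 should be worded explicitly. The straight-line homotopy from $v\circ\gamma$ to the constant loop fixes endpoints. By uniqueness of lifts, the lift of $v\circ\gamma$ starting at $a$ is $\gamma$ itself. Homotopy lifting therefore forces its endpoint $b$ to equal the endpoint $a$ of the lifted constant loop.
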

The conditions of this theorem can be met through a variety of parameterisations of the interconnect matrix and the functions $\phi_i(z_i)$ \cite{winston2020monotone, revay2020contracting}. We prove that the conditions of the theorem can be met by the following conditions on the computational graph $G_\text{LFR}$ and the functions $\phi_\text{base}$ and $\phi_\text{aug}$. 
\begin{condition}[Directed acyclic graph]\label{condition:acyclic_LFR}
The computational graph $G_\text{LFR}$ is acyclic, i.e., it contains no cycles.
\end{condition}
%
% \begin{condition}[Bounded function]\label{condition:bounded}
% The function $\phi: \mathbb{R}^n \to \mathbb{R}^m$ is bounded if $\exists M \in \mathbb{R}^+$ such that $|\phi(z)| \leq M, \enspace \forall z \in \mathbb{R}^n $
% \end{condition}
%
\begin{rem}
    For a given adjacency matrix ${P}$, the presence of a topological ordering, and thus the acyclic property, can be computed in linear time $\mathcal{O}(n)$\cite{bang2008digraphs}.
\end{rem}
\begin{condition}[Differentiability]\label{condition:differentiable}
The functions $\phi_\text{base}$ and $\phi_\text{aug}$ are $C^2$.
\end{condition}
Under these conditions, the following theorem holds:
\begin{thm}[Well-posedness of the augmentation]
    Given a parameterisation $\theta = \left(\theta_\text{LFR}, \theta_\text{base}, \theta_\text{aug}\right)$ of the model augmentation structure \eqref{eqs:general_LFR}, the parameterised structured is well-posed if Conditions~\ref{condition:acyclic_LFR} and \ref{condition:differentiable} hold. \vspace{-6mm}
    % under $\phi_\text{base}$ and $\phi_\text{aug}$, if these functions are continuously differentiable.
    % \begin{equation}
    %     \|\left[\begin{array}{c}
    %     \phi_\text{base}\left(\theta_\text{base}, \zbasek\right) \\
    %     \phi_\text{aug}\left(\theta_\text{aug}, \zaugk\right) \\
    % \end{array}\right]\| 
    % \|v\left(\zbasek, \zaugk\right)\|
    % \to \infty \text{ as }
    % \|\begin{bmatrix}
    %     \zbasek \\ \zaugk
    % \end{bmatrix}\| \to \infty
    % \end{equation}
    % and 
    % \begin{equation}
    %      \phi_\text{base} \text{ and } \phi_\text{aug} \text{ are continuously differentiable.}
    % \end{equation}
\end{thm}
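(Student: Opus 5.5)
The plan is to use the acyclicity of $G_\text{LFR}$ (Condition~\ref{condition:acyclic_LFR}) to solve the implicit equation \eqref{eq:start_elimination} by forward substitution. Hadamard's theorem (Theorem~\ref{thm:Hadamard}) can serve as a secondary certificate for the inverse $v^{-1}$.

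\textbf{Step 1: a triangular ordering.} Because $G_\text{LFR}$ is a directed acyclic graph, it admits a topological ordering of its variable nodes. Consider the subgraph induced by the latent nodes $V_{\zbase}\cup V_{\zaug}\cup V_{\wbase}\cup V_{\waug}$ after the vertex contractions that yield the adjacency matrix \eqref{eq:adjacency_matrix}. By acyclicity, the latent variables $z_k$ and $w_k$ can be permuted, via a permutation $\Pi$, so that every $z$-entry depends only on entries of $w$ that precede it, and every $w$-entry depends only on the $z$-entries that precede it. Concretely, the map $z\mapsto D_{zw}\phi(z)$ becomes strictly lower triangular in the following sense: the $i$-th component of $\Pi D_{zw}\phi(\Pi^\top \tilde z)$ depends only on $\tilde z_1,\dots,\tilde z_{i-1}$. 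This uses the fact that an edge in the composed graph $z\to w\to z$ exists exactly where $D_{zw}$ has a nonzero entry and $\partial\phi_j/\partial z_l\neq 0$, which is Definition~\ref{definition:computational_graph}\ref{th:inputs} combined with \eqref{eq:sum_node_equation}.

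\textbf{Step 2: explicit solution.} Fix $\hat x_k$ and $u_k$, and let $r$ denote the right-hand side of \eqref{eq:start_elimination}. The equation $v(z)=r$ then reads $\tilde z_i = r_i + g_i(\tilde z_1,\dots,\tilde z_{i-1})$ for suitable functions $g_i$. Solving sequentially for $i=1,\dots,N$ gives a unique $\tilde z$, hence a unique $z_k$. This proves Property~\ref{condition:well-posedness} directly, and it shows that $v$ is a bijection with the explicit inverse defined by this recursion.

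\textbf{Step 3: consistency with Theorem~\ref{thm:Hadamard}.} Under Condition~\ref{condition:differentiable}, $v$ is $C^2$. By Step~1, the Jacobian is $Dv(z)=I-D_{zw}D\phi(z)$, which after the permutation is unit lower triangular. Hence $\det Dv(z)=1\neq 0$ for every $z$. Properness of $v$ also follows, because the inverse built in Step~2 is continuous: it is a composition of continuous maps. Therefore preimages of bounded sets are bounded, and $\|v(z)\|\to\infty$ as $\|z\|\to\infty$. All hypotheses of Theorem~\ref{thm:Hadamard} then hold, and $v^{-1}$ exists and is $C^2$ by the inverse function theorem. Substituting $v^{-1}$ into the state and output equations eliminates \eqref{eq:LFR_zb}--\eqref{eq:LFR_za}.

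\textbf{Expected obstacle.} The main difficulty is Step~1: turning graph acyclicity into the strictly triangular dependence of $D_{zw}\phi$. This requires care, because $\phi_\text{base}$ and $\phi_\text{aug}$ are contracted vertices whose internal edges encode the structural sparsity of $\partial\phi/\partial z$. The cycle-freeness must be shown to pass through these contracted nodes. I would argue this by contradiction: any cycle in the induced $z$--$w$ graph lifts, through the contracted subgraphs $G_{\phi_\text{base}}$ and $G_{\phi_\text{aug}}$, to a cycle in $G_\text{LFR}$, which Condition~\ref{condition:acyclic_LFR} forbids.
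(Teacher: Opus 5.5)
Your proof is correct, and it diverges from the paper's after the first step. Both arguments use the same key fact: acyclicity of $G_\text{LFR}$ makes $z\mapsto D_{zw}\phi(z)$ strictly lower triangular after a permutation. The paper asserts this in one line (``by elementary row and column operations''); your argument that a cycle lifts through the contracted $\phi$-nodes supplies the justification it omits.

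The paper uses this fact only to verify the hypotheses of Theorem~\ref{thm:Hadamard}:
\begin{itemize}
\item $C^2$ smoothness of $v$;
\item properness of $v$;
\item a nonvanishing Jacobian, via nilpotency of $D_{zw}\,\mathrm{diag}(P_\mathrm{b},P_\mathrm{a})$.
\end{itemize}
It then obtains $v^{-1}$ from the global inverse function theorem. You instead solve $v(z)=r$ by forward substitution. That gives existence and uniqueness of $z_k$ directly and constructively; it is just the evaluation of the model in topological order. It needs neither Hadamard nor, for Property~\ref{condition:well-posedness} itself, Condition~\ref{condition:differentiable}. Smoothness only enters when you want $v^{-1}$ to be $C^2$, which is what the differentiability requirement in the consistency discussion actually uses.

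Your route also repairs a weak point in the paper's proof. Its properness step appeals to boundedness of $\bar\phi_i$ without establishing it. Your observation settles properness cleanly: the explicitly constructed inverse is continuous, so it maps closed balls to compact, hence bounded, sets.

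What the paper's Hadamard framing buys is a template that could extend to non-acyclic interconnections, such as the monotone or contracting parameterisations it cites, where forward substitution is not available. Under Condition~\ref{condition:acyclic_LFR}, however, your argument is shorter and proves a stronger statement.
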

\begin{pf}
By Condition~\ref{condition:differentiable}, $\phi_\text{base}$ and $\phi_\text{aug}$ are $C^2$, and also   \vspace{-2.0mm}
\begin{equation}
    v\left(\zbasek, \zaugk\right) = \begin{bmatrix}
        \zbasek \\ \zaugk
    \end{bmatrix} + D_{zw}\left[\begin{array}{c}
        \phi_\text{base}\left(\theta_\text{base}, \zbasek\right) \\
        \phi_\text{aug}\left(\theta_\text{aug}, \zaugk\right) \\
    \end{array}\right]  \vspace{-2.0mm}
\end{equation}
is $C^2$. This satisfies the first condition of Hadamard's global inverse function theorem.

Second, the properness of $v\left(\zbasek, \zaugk\right)$. By Condition~\ref{condition:acyclic_LFR}, we can, by elementary row and column operations, retrieve a strict lower triangular of \( D_{zw} D \phi(z) \). This implies that the function \(\bar \phi(z) =  D_{zw}\phi(z) \) has without loss of generality, the structure \vspace{-1.5mm}
\[
\bar \phi_i(z) = \bar \phi_i(z_{1}, \dots, z_{i-1}), \quad i=1,\ldots,n_z, \vspace{-1.5mm}
\]
with $\bar \phi(z) = \mathrm{col}\left(\bar \phi_1, \ldots, \bar \phi_{n_z}\right)$. Each component of \( v \) is given by \vspace{-1.5mm}
\begin{equation}\label{eq:proof_v_triang}
    v_i(z) = z_i + \bar \phi_i(z_1, \dots, z_{i-1}), \quad i=1,\ldots,n_z. \vspace{-0.5mm}
\end{equation}
We then prove that $\|v(z)\|^2_2 \to \infty$ as $\|z\|^2_2 \to \infty$. From $\|z\|^2_2~\to~\infty$, we have that at least one $|z_i|\to \infty$ while the remainder may remain bounded. We thus prove that this $z_i$ induces $\|v(z)\|^2_2 \to \infty$. For $i=1$, \( \bar \phi_1(z) \) is constant, %(it does not depend on $z$), 
therefore, $v_1(z) = z_{1} + c_1$ for some constant \( c_1 \in \mathbb{R} \). Then $|v_1(z)| \to \infty$ as $|z_1| \to \infty$. For $i= 2,\ldots,n_z$, if $\bar \phi_i$ is bounded, then by \eqref{eq:proof_v_triang} $|v_i(z)| \to \infty$ as $|z_i| \to \infty$. Thus if any $|z_i| \to \infty$, and thus $\|z\|^2_2 \to \infty$, then $\|v(z)\|^2_2$.
%
% We then prove by inductive contradiction that $\|v(z)\|^2_2 \to \infty$ as $\|z\|^2_2 \to \infty$. Suppose that a contradiction exists where there is \( z \in \mathbb{R}^{n_z} \) such that
% \[
% \|z\|^2_2 \to \infty \quad \text{but} \quad \|v(z)\|^2_2 \leq M
% \]
% for some constant \( M > 0 \). This implies that for each \( i \)
% \[
% |v_i(z)| = |z_{i} + \bar \phi_i(z_{1}, \dots, z_{i-1})| \leq M.
% \]
% For $i=1$, \( \bar \phi_1(z) \) is constant (it does not depend on $z$), therefore,
% \[
% v_1(z) = z_{1} + c_1
% \]
% for some constant \( c_1 \in \mathbb{R} \). Then \( |z_{1}| = |v_1(z) - c_1| \leq M + |c_1| \), so \( z_{1} \) is bounded.

% Next, the induction step. Suppose \( z_{1}, \dots, z_{i-1} \) are bounded. Since \( \bar \phi_i(z_{1}, \dots, z_{i-1}) \) depends only on these bounded variables, it is itself bounded. Since \( v_i(z) = z_{i} + \bar \phi_i(z) \) is also bounded, it follows that \( z_{i} \) is bounded.

% By induction, if $\|v(z)\|^2_2$ is bounded, then also \( z \) is bounded, contradicting the assumption that \( \|z\|^2_2 \to \infty \). Therefore, for any \( \{z\}\in \mathbb{R}^{n_z} \) with \( \|z\|^2_2 \to \infty \), we must have \( \|v(z)\|^2_2 \to \infty \).

Third, we prove the Jacobian determinant condition. The Jacobian determinant of $v\left(\zbasek, \zaugk\right)$ is written as \vspace{-1.5mm}
\begin{equation}
    \det\left(D v\left(\zbasek,
        \zaugk\right)\right) = \det\left(\mathrm{I} - \mathit{D_{zw}} D \phi\left(\zbasek,
        \zaugk\right)\right). \vspace{-1.5mm}
\end{equation}
By Condition~\ref{condition:acyclic_LFR}, we know that all subgraphs of $G_\text{LFR}$ are acyclic. Thus the feedback connection, given as $\mathit{D_{zw}} \text{diag}(P_b, P_a)$,
% \begin{equation}
%     \mathit{D_{zw}}\begin{bmatrix}
%     P_b & 0 \\
%     0 & P_a
%     \end{bmatrix}
% \end{equation}
must be acyclic and thus nilpotent. Then, by the construction of $P_b$ and $P_a$, also the matrix $\mathit{D_{zw}} \text{diag}(D \phi_{\text{base}}, D \phi_{\text{aug}})$
% \begin{equation}\label{eq:acyclic_jacobian}
%     \mathit{D_{zw}}\begin{bmatrix}
%     D \phi_{\text{base}} & 0 \\
%     0 & D \phi_{\text{aug}}  
% \end{bmatrix}
% \end{equation}
is nilpotent, implying  %By the properties of nilpotent matrices 
\vspace{-1.5mm}
\begin{equation}
    \det\left(\mathrm{I} - \mathit{D_{zw}} \begin{bmatrix}
    D \phi_{\text{base}} & 0 \\
    0 & D \phi_{\text{aug}}  
    \end{bmatrix}\right) \neq 0. \vspace{-3.5mm}
\end{equation}
Then $v\left(\zbasek,\zaugk\right)$ has an inverse $v^{-1}$ and the algebraic loop can be eliminated, proving the well-posedness of \eqref{eqs:general_LFR}. %the model augmentation structure~\eqref{eqs:general_LFR}.  
\hfill $\blacksquare$ \vspace{-2mm}
\end{pf}
\begin{rem}\label{rem:LFR_WP_options}
    Condition~\ref{condition:acyclic_LFR}, which ensures the well-posedness of the LFR-based structure in \eqref{eqs:general_LFR}, can be satisfied through several strategies. The simplest approach is to restrict $D_{zw}\equiv~0$ during model training. However, this constraint limits the generality of the method by reducing the variety of model augmentation structures that the LFR-based representation can capture. A more flexible alternative is to include only one of the components, either $D_{zw}^\mathrm{ba}$ or $D_{zw}^\mathrm{ab}$, in the parameter vector $\theta_\mathrm{LFR}$, while setting all other elements of $D_{zw}$ to zero. This approach preserves the acyclic condition while allowing for a richer class of model augmentation structures to be represented by the LFR-based formulation. The selection between tuning $D_{zw}^\mathrm{ba}$ or $D_{zw}^\mathrm{ab}$ is a modelling choice and should be guided by prior physical insights. It is left to future work to develop parameterisations that directly guarantee well-posedness of the model structure.
\end{rem}

\vspace{-1mm}\section{Identification Algorithm} \label{sec:Identification}\vspace{-1mm}
Given the proposed model augmentation structure \eqref{eqs:general_LFR}, our goal is to estimate the parameters of the model structure based on measured data in order to capture the behaviour of the data-generating system \eqref{eq:nl_dyn}. To this end, we specify an identification algorithm consisting of identification criterion, baseline parameter regularisation, data and baseline model normalisation, and parameter initialisation. 
% Measurement data and model testing will be handled in the examples in Sections~\ref{sec:SimulationStudy} and \ref{sec:RealWorldIdent} as these depend on the application considered.

% \begin{figure}
%     \centering
%     \includegraphics[width=1.0\linewidth]{Figures/MA_identification_cycle.pdf}
%     %\vspace{2pt}
%     \caption{Identification cycle with proposed LFR-based model augmentation structure (grey) and components described in this chapter (dashed grey).}
%     \label{fig:ma_identification_cycle}
% \end{figure}

% \subsection{Identification criterion}
% The identification criterion consists of two components. First, we introduce the truncated loss function. This is a simulation based loss with beneficial additional properties. Second, we introduce a regularisation term for the baseline model parameters.
\vspace{-1mm}
\subsection{Truncated loss function}\label{sec:truncated_loss}\vspace{-2mm}
We adapt the multiple-shooting-based truncated objective function~\cite{gerben2022}. This is a truncated prediction loss based objective function. The data~$\mathcal{D}_N$ is split into $N$~subsections of length~$T$. This allows for the use of computationally efficient batch optimisation methods popular in machine learning, while also increasing data efficiency~\cite{gerben2022}. This truncated objective function is given as \vspace{-2mm}
\begin{subequations} \label{eq:loss_function}
    \hspace{-5mm}  \begin{align}
         \hspace*{-5mm}  V_\text{trunc}(\theta) \hspace*{-0mm} \! = \! \hspace*{-0mm}  &\frac{1}{N-T+1} \hspace*{-2mm} \sum_{i=1}^{N-T+1} \hspace*{-1.mm} \frac{1}{T}\sum_{\ell=0}^{T-1}\left\|\hat{y}_{k_i+\ell \vert k_i}\!-\!y_{k_i+\ell}\right\|_2^2 \\
    \begin{bmatrix}
        \hat{x}_{k_i+\ell + 1 \vert k_i} \\
        \hat{y}_{k_i+\ell \vert k_i} \\
        w_{\mathrm{b},k_i+\ell \vert k_i} \\
        w_{\mathrm{a},k_i+\ell \vert k_i}
    \end{bmatrix} &:= W(\theta_\mathrm{LFR}) \begin{bmatrix}
        \hat{x}_{k_i+\ell \vert k_i} \\
        u_{k_i+\ell} \\
        z_{\mathrm{b},k_i+\ell \vert k_i} \\
        z_{\mathrm{a},k_i+\ell \vert k_i}
    \end{bmatrix} \\
    w_{\mathrm{b},k_i+\ell \vert k_i} & := \phi_\text{base}\left(\theta_\text{base}, \zbasek\right)\\
    w_{\mathrm{a},k_i+\ell \vert k_i} & := \phi_\text{aug}\left(\theta_\text{aug}, \zaugk\right)\\
    \hat{x}_{k \vert k} & :=\mathnormal{\psi}\left(\theta_\text{encoder}, y_{k-n_a}^{k-1}, u_{k-n_b}^{k-1}\right),
    \end{align}
\end{subequations}
\vskip -4mm
where $\theta = \text{col}(\theta_\text{base}, \theta_\text{aug}, \theta_\mathrm{LFR}, \theta_\text{encoder})$ is the joint parameter vector, $k+\ell \vert k$ indicates the state $\hat{x}_k$ or the output $\hat{y}_k$ at time $k + \ell$ simulated from the initial state $\hat{x}_{k \vert k}$ at time $k$. The subsections start at a randomly selected time $k_i\in \{n+1,\ldots,N-T\}$. The initial state of these subsections is estimated by an encoder function $\psi$ using past input-output data, i.e., $\hat{x}_{k|k}=\psi(\theta_\text{encoder}, y_{k-n_a}^{k-1}, u_{k-n_b}^{k-1})$ where $u_{k-n_b}^{k-1}= [\ u_{k-n_b}^\top \ \cdots \ u_{k-1}^\top ]^\top$ for $\tau\geq 0$ and $y_{k-n_a}^{k-1}$ is defined similarly. The model structure with the encoder $\psi$ included, is shown in Fig.~\ref{fig:n-step-encoder-graphic}. 

The existence of the encoder $\psi$ has been shown in \cite{beintema2024data} for state-space models. We give a brief overview  of the underlying mechanism. We derive the $n$-step ahead predictor of the data generating system \eqref{eq:nl_dyn}, resulting in \vspace{-2.5mm}
%
%\begin{subequations}
    \begin{equation}
        y_k^{n+k}
        = \underbrace{\begin{bmatrix}
            h(x_k, u_k) + e_k \\
            (h \circ f)(x_k, u_k^{k+1}) + e_{k+1} \\
            \vdots \\
            (h \circ_n f)(x_k, u_k^{n+k}) + e_{k+n} \\
        \end{bmatrix}}_{\Gamma_n\left(x_k, u_k^{n+k}\right) + e_k^{n+k}}.  \vspace{-2.5mm}
    \end{equation}
%\end{subequations}
%
where $\circ_n$ stands for $n$ times recursive function composition. If $\Gamma_n$ is partially invertible as $x_k = \mathnormal{\Phi}_n(u_k^{n+k}, y_k^{n+k} - e_k^{n+k})$, then the reconstructability map \cite{isidori1985nonlinear} is given as  \vspace{-2.5mm}
\begin{subequations}
    \begin{align}
        x_k & = (\circ_n f)(x_{k-n}, u_{k-n}^{k}) \\
        & = (\circ_n f)(\mathnormal{\Phi}_n(u_{k-n}^{k}, y_{k-n}^{k}-  e_{k-n}^{k}), u_{k-n}^{k}) \\
        & = \mathnormal{\Psi}_n(u_{k-n}^{k}, y_{k-n}^{k} -  e_{k-n}^{k}). \label{eq:encoder_derived}
    \end{align}
\end{subequations}
\vskip -4.5mm
However, the noise sequence $e_{k-n}^{k}$ is not available in practice. Under the assumption that  $e_k$ is i.i.d. white noise, we can use the conditional expectation of \eqref{eq:encoder_derived} as an estimate of $x_k$:  \vspace{-2mm}
\begin{equation}
    \bar x_k = \mathbb{E}_{e_k}\left[x_k | u_{k-n}^{k}, y_{k-n}^{k}\right] = \bar{\mathnormal{\Psi}}(u_{k-n}^{k}, y_{k-n}^{k}),  \vspace{-1.5mm}
\end{equation}
which is an unbiased estimator of $x_k$\cite{janny2022learning}. This estimator is difficult to compute in practice due to the required analytical inversion to obtain $\mathnormal{\Phi}_n$, which varies with the choices of $\theta_\text{base}$, $\theta_\text{aug}$ and $\theta_\mathrm{LFR}$. Instead the parameterised function estimator $\psi$ is used to co-estimate $\bar{\mathnormal{\Psi}}$ with $W(\theta_\mathrm{LFR})$, $\phi_\text{base}$ and $\phi_\text{aug}$.
\begin{figure}
    \centering
    \includegraphics[width=1.0\linewidth]{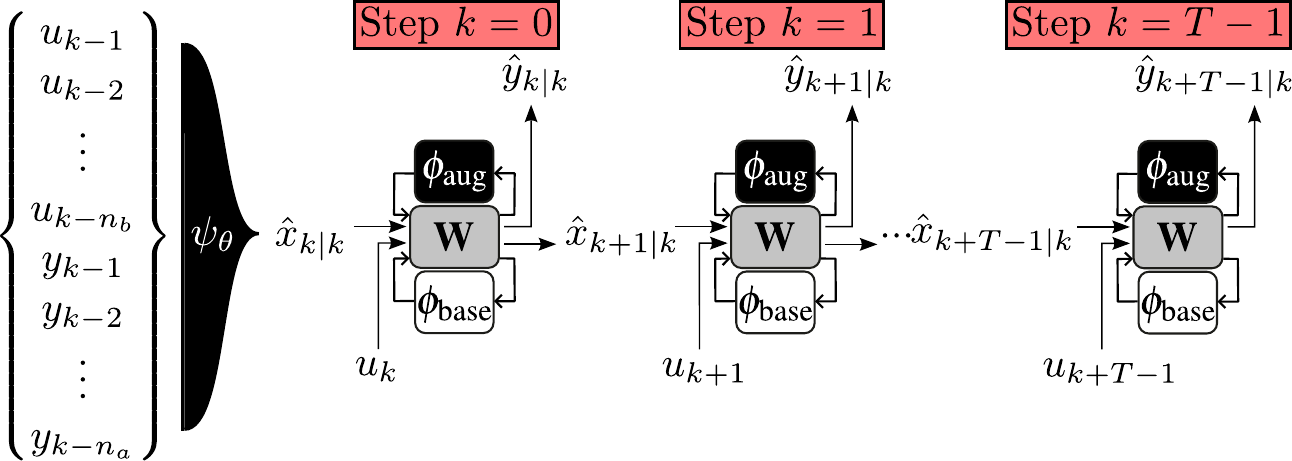}
    %\vspace{2pt}
    \caption{SUBNET structure: the subspace encoder $\psi_{\theta}$ estimates the initial state at time $k$ based on past inputs and outputs, then it is propagated through $\phi_\theta$ multiple times until a simulation length $T$.}
    \label{fig:n-step-encoder-graphic} \vspace{-1.5mm}
\end{figure}

 \vspace{-1mm}
\subsection{Baseline parameter regularisation} \vspace{-1.5mm}
The proposed LFR-based model structure \eqref{eqs:general_LFR} is overparameterised and, as a result, the optimal parameter values $\left(\theta_\mathrm{base}^\ast,\theta_\mathrm{aug}^\ast,\theta_\mathrm{LFR}^\ast\right)$ that minimise \eqref{eq:loss_function} are not unique. Therefore, the joint identification of all parameters can result in the learning components representing or cancelling out part of the baseline model dynamics.
These non-unique parameterisations have similar model performance; however, some parameterisations may generate physical parameters that deviate wildly from the expected parameters, even leading to physically unrealistic parameter values. To address this issue and retain the interpretability of the baseline model, we adapt the regularisation cost term from \cite{bolderman_feedforward_2022, bolderman_physics-guided_2024}:  \vspace{-2.5mm} %, which is given as
\begin{equation}\label{eq:regularization}
    V^\text{reg}(\theta) = \left\|\Lambda \left(\theta_\text{base} - \theta_\text{base}^0\right)\right\|_2^2  \vspace{-2.5mm}
\end{equation}
where $\Lambda=\lambda \mathrm{diag}(\theta_\mathrm{base}^0)^{-1}$ with $\lambda \in \mathbb{R}_{\geq 0}$ as a tunable parameter. This regularisation term \eqref{eq:regularization} penalises deviations of the baseline parameters from the a priori selected values $\theta_\mathrm{base}^0$, with the diagonal element normalising the importance of the parameters compared to each other. The cost function with the regularisation term becomes  \vspace{-2.5mm}
\begin{equation}\label{eq:regularized_cost}
    V_{\mathcal{D}_N}(\theta) = V_{\mathcal{D}_N}^\text{trunc}(\theta) + V^\text{reg}(\theta).  \vspace{-2.5mm}
\end{equation}
The tunable variable $\lambda$ determines how much the baseline parameters can deviate from the nominal parameter set $\theta_\mathrm{base}^0$, relative to the change in $T$-step-ahead prediction cost \eqref{eq:loss_function}.
%
% This results in a model augmentation structure where the learning component augments the first-principle model. At the same time, deviations of the physical parameters are allowed when it results in better data fit. 
% Furthermore, the Lambda matrix $\Lambda$ specifies the relative importance of each baseline model parameter. It also serves as a normalisation layer when it is defined as $\Lambda=\lambda \mathrm{diag}(\theta_\mathrm{base}^0)^{-1}$, where $\lambda \in \mathbb{R}_{\geq 0}$ is a tunable parameter. The value of $\lambda$ determines how much the baseline parameters can deviate from the nominal parameter set $\theta_\mathrm{base}^0$, relative to the change in $T$-step-ahead prediction cost \eqref{eq:loss_function}.
% It can be also interpreted as a trade-off between model performance and parameter interpretability. If $\lambda=0$, the original objection function \eqref{eq:loss_function} is retained, while for $\lambda \rightarrow\infty$ only \eqref{eq:regularization} remains active with a trivial solution of $\theta_\mathrm{base}^0$.
%
% \begin{rem}
%     As an alternative approach, an orthogonal projection-based regularisation is proposed for model augmentation in \cite{gyorok_orthogonal_2025} to avoid competition between the physics-based and learning components. However, that method is currently limited to the static parallel model augmentation structure.
% \end{rem}
 \vspace{-1mm}
\subsection{Data and baseline model normalisation} \vspace{-1.5mm}
For the estimation of model structures containing ANNs, normalisation of the input and output data to zero mean and to a standard deviation of~1 has been shown to improve model estimation \cite{bishop1995neural}. Therefore, we normalise $u$ and $y$ in $\mathcal{D}_N$ and aim to initialise the model structure and \eqref{eqs:general_LFR} so that $\hat x$ is also normalised as in \cite{gerben2022}. For this, the to-be-augmented baseline model $\phi_\text{base}$ needs to be considered in the normalisation process.
%, see \cite{dehkordi2026normalization} for an example on parallel augmentation.
We take $\phi_\text{base}$ into consideration based on the work in \cite{Schoukens2020}. This results in a model that takes normalised input and state, and returns a normalised output while not altering its dynamics. For a baseline model operating around zero mean $\hat x$, $u$, $\hat y$, this can be achieved by the transformation  \vspace{-2.5mm}
\begin{subequations}
    \begin{align}
    \bar{f}_{\text{base}} & = T_x f_\text{base}\left(\theta_\text{base}, T_x ^{-1} \xbase, T_u ^{-1} u \right) \\
    \bar{h}_{\text{base}} & = T_y h_\text{base}\left(\theta_\text{base}, T_x ^{-1} \xbase, T_u ^{-1} u \right),
\end{align}
\end{subequations}
\vskip -4.5mm
where $T_u \in \mathbb{R}^{n_u}$ is a diagonal matrix composed as $T_\mathrm{u}^{-1}= \text{diag}(\sigma_{u_1}^{-1},\ldots, \sigma_{u_{n_\mathrm{u}}}^{-1})$, where $\sigma_u$ is the sample-based standard deviation of each input signal, computed based on the data set $\mathcal{D}_N$. The transformation matrix $T_y$ is defined in a similar way. For $T_x$, the standard deviation $\sigma_x$ is determined on a sequence of baseline states $\xbase$. For this, we simulate the baseline model with $\theta_\mathrm{base}^0$ for the given input sequence resulting in $\hat{x}_{b,k}$. We define an extended data set $\hat{\mathcal{D}}_N = \{ \left({y}_k, \hat{x}_{\mathrm{b},k}, u_k\right)\}_{k=1}^N$ for initialisation of the encoder later.
% For this, we complete dataset $\mathcal{D}_N$ by including the simulation of the baseline model for the given input sequence, i.e., $\hat{\mathcal{D}}_N = \{ \left({y}_k, \hat{x}_{\mathrm{b},k}, u_k\right)\}_{k=1}^N$.

 \vspace{-1mm}
\subsection{Model structure initialisation}\label{sec:parameter_init}  \vspace{-1.5mm}
Next, we consider the initialisation of the learning components, LFR matrix, and the encoder. A common approach in the literature for initialising ANNs is to randomly assign weights and biases. However, for baseline models, random initialisation can be unstable and result in poor optimisation results. Furthermore, initialisation based on prior information, e.g., baseline model behaviour, can improve convergence rate and enhance model accuracy.
% However, for model augmentation structures, initialising based on the baseline model and conserving the baseline model behaviour at start of training, has been shown to improve convergence time and enhance model accuracy. Furthermore, random initialisation can be unstable and thus result in poor optimisation. To preserve the baseline model behaviour, we initialise the encoder with the baseline model and consider pass-through initialisations for the learning components

 \vspace{-1mm}
\subsubsection{Model behaviour at initialisation} \vspace{-1mm}
We propose to initialise the parameters $\theta$ so that the LFR-based model structure with the encoder behaves equivalent to the baseline model on initialisation. We note the model structure \eqref{eqs:general_LFR} with $\zbasek$ and $\zaugk$ eliminated as $\Omega$. Then the initialisation of $\theta$ should realise the following behaviour  \vspace{-2.5mm}
%
% \begin{multline}\label{eq:baseline_behaviour}
%     \left[\begin{array}{c}
%     \hat x_{k+1} \\
%     \hat y_k \\
%     \end{array}\right] =
%     \left[\begin{array}{cc}
%     A & B_u\\
%     C_y & D_{yu}
%     \end{array}\right]
%     \left[\begin{array}{c}
%         \hat x_k \\
%         u_k \\
%     \end{array}\right] + \\
%     \left[\begin{array}{cc}
%     B_w^\mathrm{b} & B_w^\mathrm{a}\\
%     D_{yw}^\mathrm{b} & D_{yw}^\mathrm{a}
%     \end{array}\right]
%     \phi \circ v^{-1}\left(\left[\begin{array}{cc}
%         C_z^\mathrm{b} & D_{zu}^\mathrm{b}\\
%         C_z^\mathrm{a} & D_{zu}^\mathrm{a}
%     \end{array}\right]
%     \hspace{-2pt}\left[\begin{array}{c}
%         \hat x_k \\
%         u_k \\
%     \end{array}\right]\right) \\
%     = \begin{bmatrix}
%         f_\mathrm{base}(\theta_\mathrm{base},\zbasek)\\
%         \tilde{\mathcal{N}}(\theta_\mathrm{aug},\zaugk) \\
%         h_\mathrm{base}(\theta_\mathrm{base},\zbasek)
%     \end{bmatrix},
% \end{multline}
\begin{equation}\label{eq:baseline_behaviour}
    \left[\begin{array}{c}
    x_{\text{b},k+1} \\
    x_{\text{a},k+1} \\
    \hat y_k \\
    \end{array}\right] = \Omega \left(\theta, \zbasek, \zaugk\right) = \begin{bmatrix}
        f_\mathrm{base}(\theta_\mathrm{base}^0,\zbasek)\\
        \phi_\text{aug}(\theta_\mathrm{aug}^0,\zaugk) \\
        h_\mathrm{base}(\theta_\mathrm{base}^0,\zbasek)
    \end{bmatrix},  \vspace{-2.5mm}
\end{equation}
where $\theta_\mathrm{aug}^0$ are the initialised parameters for the augmentation. It is not trivial to determine such an initialisation for any arbitrary baseline model, learning function, and LFR matrix combination. We propose here a method that can achieve this initialisation under the following conditions:
\begin{enumerate}[label=(\alph*)]
    \item \label{th:acyclic}The computational graph $G_\text{LFR}$ is acyclic
    \item \label{th:linear}The learning functions are ResNets \eqref{eq:resnet}.
\end{enumerate}
Condition \ref{th:acyclic} is the same condition as for the well-posedness proof and thus is not restrictive. Condition \ref{th:linear} is required to feasibly create series augmentations that can have baseline model behaviour.
 \vspace{-1mm}
\subsubsection{Encoder initialisation} \vspace{-1.5mm}
% We assume that the encoder network is parameterised by a ResNET, then the encoder-based state estimation can be expressed as
% \begin{multline}
%     \begin{bmatrix}
%         x_{\mathrm{b},k\vert k}\\x_{\mathrm{a},k\vert k}
%     \end{bmatrix} = \psi_\theta\left(y_{k-n_a}^{k-1}, u_{k-n_b}^{k-1}\right) = \\\tilde{\psi}_{\tilde{\theta}}\left(.,\, .\right) + \begin{bmatrix}
%         W_{\psi,y}^\mathrm{b} & W_{\psi,u}^\mathrm{b}\\
%         W_{\psi,y}^\mathrm{a} & W_{\psi,u}^\mathrm{a}
%     \end{bmatrix} \begin{bmatrix}
%         y_{k-n_a}^{k-1}\\ u_{k-n_b}^{k-1}
%     \end{bmatrix},
% \end{multline}
% where $\tilde{\psi}$ is a feedforward ANN with $\tilde{\theta}$ parameters, where $W_{\psi,y}^\mathrm{b}$, and $W_{\psi,u}^\mathrm{b}$ represent the linear parts of the network corresponding to the baseline state reconstruction, while $W_{\psi,y}^\mathrm{a}$, and $W_{\psi,u}^\mathrm{a}$ are utilised for reconstructing the augmented states.

The encoder is parameterised by an ANN, e.g., a ResNET. To guarantee baseline behaviour at initialisation, this encoder should at initialisation output the baseline state sequence as in the extended dataset $\hat{\mathcal{D}}_N$. This could be derived analytically as in Section~\ref{sec:truncated_loss}, but this is complicated on not feasible for all ANNs. Instead, we consider a data-driven approach. We fit a baseline encoder $\psi_\text{base}$ on this dataset using the following loss function during the initialisation step  \vspace{-2.5mm}
\begin{equation}\label{eq:initialisation_loss}
    V_\text{enc}(\theta) = \frac{1}{N} \sum_{k=1}^N \left\| \psi_\text{base}\left(\theta_\text{base},\hat y_{k-n_a}^{k-1}, u_{k-n_b}^{k-1}\right) -\hat{x}_{\mathrm{b},k}\right\|_2^2,  \vspace{-2.5mm}
\end{equation}
where $\hat{x}_{\mathrm{b},k}$ is the forward simulated state of the baseline model in the extended dataset $\hat{\mathcal{D}}_N$. If augmented states are considered, the baseline encoder $\psi_\text{base}$ is extended with an augmented state encoder $\psi_\text{aug}$  \vspace{-2.5mm}
\begin{equation}
    \begin{bmatrix}
        {x}_{\text{b},k \vert k} \\
        {x}_{\text{a},k \vert k}
    \end{bmatrix} = \begin{bmatrix}
        \psi_\text{b}\left(\theta_\text{base},\hat y_{k-n_a}^{k-1}, u_{k-n_b}^{k-1}\right) \\
        \psi_\text{a}\left(\theta_\text{aug},\hat y_{k-n_a}^{k-1}, u_{k-n_b}^{k-1}\right)
    \end{bmatrix},  \vspace{-2.5mm}
\end{equation}
where the weights and biases of $\psi_\text{aug}$ are initialised by the Xavier approach. The loss function \eqref{eq:initialisation_loss} is no longer considered after initialisation.
 \vspace{-1mm}
\subsubsection{LFR matrix and learning component initialisation} \vspace{-1.5mm}
% \begin{subequations}\label{eqs:general_LFR}
% \begin{align}
%     \left[\begin{array}{c}
%         \hat{x}_{k+1}\\ \hat{y}_k \\ \hdashline[5pt/2pt]  \zbasek\\ \zaugk
%     \end{array}\right] &= \underbrace{\left[\begin{array}{cc;{5pt/2pt}cc}
%         A & B_u & B_w^\mathrm{b} & B_w^\mathrm{a}\\
%         C_y & D_{yu} & D_{yw}^\mathrm{b} & D_{yw}^\mathrm{a}\\ \hdashline[5pt/2pt] 
%         C_z^\mathrm{b} & D_{zu}^\mathrm{b} & D_{zw}^\mathrm{bb} & D_{zw}^\mathrm{ba}\\
%         C_z^\mathrm{a} & D_{zu}^\mathrm{a} & D_{zw}^\mathrm{ab} & D_{zw}^\mathrm{aa}
%     \end{array}\right]}_{W(\theta_\mathrm{LFR})} \begin{bmatrix}
%         \hat{x}_k\\ u_k\\ \hdashline[5pt/2pt]  \wbasek\\ \waugk
%     \end{bmatrix},\\
%         \wbasek & = \phi_\text{base}\left(\theta_\text{base}, \zbasek\right),\\
%         \waugk & = \phi_\text{aug}\left(\theta_\text{aug}, \zaugk\right),
% \end{align}
% \end{subequations}
We now initialise the LFR matrix and the learning components so that \eqref{eq:baseline_behaviour} holds for initialisation. We can simplify this equation with Condition \ref{th:linear} by only considering the linear component of the learning function and initialising the NL component to be zero, i.e., $\phi_\text{aug}\left(\zaugk\right) = 0 + W_a \zaugk$. We further assume under the acyclic property, without loss of generality, that $D_{zw}$ is lower block diagonal with respect to the learning functions and the baseline model. %\footnote{This reasoning can be extended to divisions of the baseline model and learning functions, resulting in the same trade-off with more matrices.}. 
This means that $D_{zw}^\mathrm{aa} = 0$, $D_{zw}^\mathrm{bb} = 0$, and either $D_{zw}^\mathrm{ba}=0$ or ${D}_{zw}^\mathrm{ab}=0$. Substituting the linear component of the learning function into \eqref{eq:start_elimination} and eliminating $\zaugk$, gives  \vspace{-2.5mm}
\begin{equation}
    \zbasek =  \underbrace{\left[\begin{array}{cc}
    C_z^\mathrm{b} + D_{zw}^\mathrm{ba} W_a{C}_z^\mathrm{a} & {D}_{zu}^\mathrm{b} + D_{zw}^\mathrm{ba} W_a {D}_{zu}^\mathrm{a}
    \end{array}\right]}_{\tilde{C}} \left[\begin{array}{c}
        \hat x_k \\
        u_k \\
    \end{array}\right],  \vspace{-2.5mm}
\end{equation}
and the prediction equation  \vspace{-2.5mm}
\begin{multline}
    \left[\begin{array}{c}
    \hat x_{k+1} \\
    \hat y_k \\
    \end{array}\right] =
    \underbrace{\left[\begin{array}{cc}
    {A} + B_w^\mathrm{a} W_a {C}_z^\mathrm{a} & {B_u} + B_w^\mathrm{a} W_a {D}_{zu}^\mathrm{a}\\
    {C_y} + D_{yw}^\mathrm{a} W_a {C}_z^\mathrm{a} & {D_{yu}} + D_{yw}^\mathrm{a} W_a {D}_{zu}^\mathrm{a}
    \end{array}\right]}_{\tilde{A}}
    \left[\begin{array}{c}
        \hat x_k \\
        u_k \\
    \end{array}\right] + \\
    \underbrace{\left[\begin{array}{cc}
    B_w^\mathrm{b} + B_w^\mathrm{a} W_a {D}_{zw}^\mathrm{ab}\\
    D_{yw}^\mathrm{b} +  D_{yw}^\mathrm{a} W_a  {D}_{zw}^\mathrm{ab} 
    \end{array}\right]}_{\tilde{B}}
    \phi_\text{base}\left(\zbasek\right). 
\end{multline}
\vskip -4mm
Thus, to have an initialisation satisfying \eqref{eq:baseline_behaviour}, we require $\tilde{B} = I_{n_x+n_y}$ and $\tilde{C} = I_{n_x+n_u}$. We repeat here that we assume either $D_{zw}^\mathrm{ba}=0$  or ${D}_{zw}^\mathrm{ab}=0$ to satisfy Condition~\ref{condition:acyclic_LFR}. The choice between these assumptions results in initialisation similar to the model structures derived in Appendix~\ref{appendix:general_LFR_augmentation_forms}, with $D_{zw}^\mathrm{ba}=0$ resulting in series output augmentations, ${D}_{zw}^\mathrm{ab}=0$ in series-input and $D_{zw}^\mathrm{ba}={D}_{zw}^\mathrm{ab}=0$ in parallel augmentation.

All matrices not required to set the baseline model behaviour at initialisation \eqref{eq:baseline_behaviour} have all elements $m$ of the matrix initialised randomly, according to \cite{Schoukens2020}, i.e., $m \sim \mathcal{U}(-1,1)$ where $\mathcal{U}(a,b)$ denotes a uniform distribution with support from $a$ to $b$.  \vspace{-2.5mm}
\subsection{Convergence and Consistency} \vspace{-1.5mm}
% The notions of \emph{convergence} and \emph{consistency} follow \cite{ljung1978convergence} and their specialisation in \cite{gerben2022}.
Next, we can analyse the statistical properties of the introduced augmentation approach in terms of convergence and consistency \cite{ljung1978convergence}. \emph{Convergence} implies that, as the number of samples in $\mathcal{D}_N$ tends to infinity, the empirical identification criterion approaches its expected value. An estimator is \emph{consistent} if, as $N \to \infty$, the estimated model converges to an equivalent representation of the true system \eqref{eq:nl_dyn}.

Under Conditions~2.1--2.4 in \cite{gerben2022} on the data-generating process \eqref{eq:nl_dyn}, model structure \eqref{eqs:general_LFR}, and identification criterion \eqref{eq:loss_function}, the convergence and consistency are proven. Condition~2.1 requires the data-generating system to be incrementally exponential output stable, which becomes an assumption on the considered system \eqref{eq:nl_dyn}. Similarly, Conditions 2.3 on predictor convergence and 2.4 on persistence of excitation are commonly assumed to hold. However, Condition~2.2 warrants additional consideration, as it demands differentiability of the model structure, including the encoder in \eqref{eq:loss_function}, with respect to $\theta$. This is not trivial due to the feedback connection in the model structure. If we take the acyclic condition (Condition~\ref{th:acyclic}), then differentiability of $v(\zbasek,\zaugk)$ is ensured, and it remains only to assume differentiability of the prediction mappings \eqref{eqs:state_output_LFR}, $\phi_\text{base}$, and $\phi_\text{aug}$—a standard and non-restrictive assumption in system identification. Thus, all conditions in \cite{gerben2022} are fulfilled to imply convergence and consistency of the proposed model augmentation approach.

 \vspace{-1mm}\section{Simulation Study} \label{sec:SimulationStudy} \vspace{-1mm}
% \OL{General specification of what is done here.
%
% In this section we analyse the ability of the different model augmentation realisations of the LFR on different systems with a baseline model. By doing this we show that the LFR is able to estimate the systems starting from the baseline model, while having a faster estimation and also being more transparent than the fully ANN-SS model. We consider here structured parameterisations of the LFR, e.g.,  the ones shown in Table~\ref{tab:augmentation_structures}and Table~\ref{tab:output_augmentation_structures}. This is then further expanded to more free parameterisations in the next Section with a real world setup and a more complex baseline mode, where it is not a priori clear what would be the best form of augmentation.}
%
In this section, we analyse the performance of different model augmentation realisations of the LFR model when applied to various systems using a common baseline model. The objective is to demonstrate that the proposed LFR augmentation structure together with the proposed identification methods can effectively estimate system behaviour starting from the baseline model, while achieving faster estimation and offering greater transparency compared to the fully ANN-SS model. 
We focus here on structured parameterisations of the LFR in the form of those presented in Table~\ref{tab:augmentation_structures} and Table~\ref{tab:output_augmentation_structures}. In Section~\ref{sec:RealWorldIdent}, this analysis is extended to flexible parameterisations, applied to a real-world setup with a more complex baseline model, where the most suitable form of augmentation is not known a priori.
 \vspace{-1mm}
\subsection{Mass-Spring-Damper system and data generation} \vspace{-1.5mm}
% As a simulation example, a  MSD system composed from 3 masses under various configurations is considered as shown in Fig. \ref{fig:msd_systems} with the physical parameters given in Table \ref{tab:params_msd}. All systems have a hardening spring nonlinearity, the second also has input saturation, and the third one has a first-order low-pass filter applied on the output. The states associated with the system representation are the positions $p_i$ and the accelerations $\dot p_i$ of the masses $m_1$, $m_2$, $m_3$. The hardening spring is described as a cubic nonlinearity with parameter $a_1$. The position $p_2$ is the measured output and the input force is applied on the first mass.
%
As a simulation example, a \emph{mass–spring–damper} (MSD) system consisting of three masses is considered under three different configurations, as illustrated in Fig.~\ref{fig:msd_systems}. The corresponding physical parameters are listed in Table~\ref{tab:params_msd}. Configuration (a) consists of three masses and a hardening spring nonlinearity. Configurations (b) and (c) add to this an input saturation and a first-order low-pass filter (LPF) to the output, respectively.

Configuration (a) is described in terms of a state-space representation with a total of 6 states as the positions $p_i$ and velocities $\dot{p}_i$ of the masses $m_1$, $m_2$, and $m_3$. The hardening spring nonlinearity is simulated as a cubic stiffness term multiplied by the parameter $a_1$. The measured output corresponds to the position $p_2$, and the external input force is applied to the first mass $m_1$. For Configuration (b) the saturation is $30\tanh(\frac{u}{30})$ which results in a reduction of the applied multisine input signal RMS from $10.0$ N  to $9.11$ N. For Configuration (c), the LPF is simulated such that the Bode plot of the LPF component is as shown in Fig.~\ref{fig:bodeplot_lpf}.

% These three systems demonstrate the variety of different model augmentation structures that can be required and that the proposed augmentation structure can represent.

\begin{figure*}
    \begin{subfigure}[]{0.475\textwidth}
        \centering 
        \includegraphics[scale=0.5]{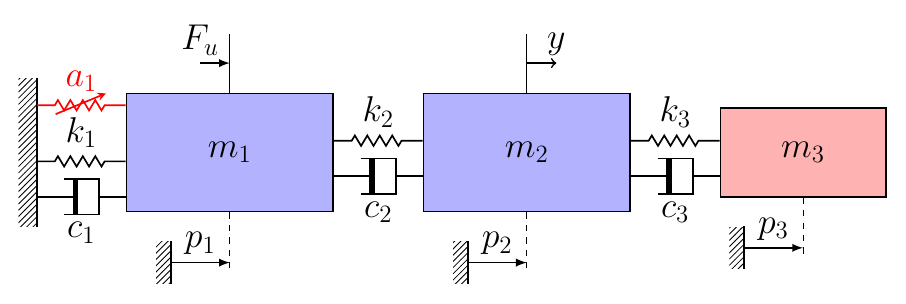}
        \caption{3-DOF MSD with nonlinear cubic spring}    
        \label{fig:msd_nonlinear}
    \end{subfigure}
    \hfill
    \begin{subfigure}[]{0.475\textwidth}  
        \centering 
        \includegraphics[scale=0.5]{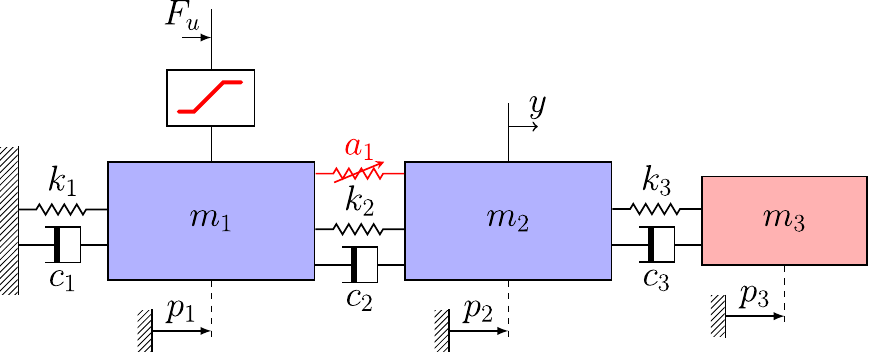}
        \caption{3-DOF MSD with nonlinear cubic spring and input saturation} 
        \label{fig:msd_input}
    \end{subfigure}
    \vskip -1mm
    
    \begin{subfigure}[]{0.475\textwidth}   
        \centering 
        \includegraphics[scale=0.5]{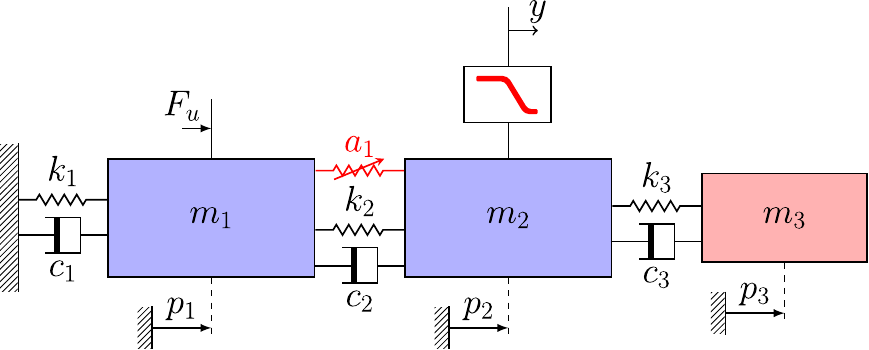}
        \caption{3-DOF MSD with nonlinear cubic spring and low-pass filter}  
        \label{fig:msd_output}
    \end{subfigure}
    \hfill
    \begin{subfigure}[]{0.38\textwidth}   
        \centering
        \includegraphics[scale=0.5]{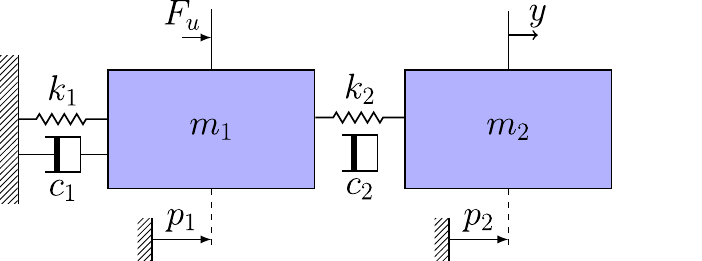}
        \caption{2-DOF MSD linear baseline model}
        \label{fig:msd_baseline}
    \end{subfigure}
    \caption[]
    {Considered baseline model in blue and black and the MSD data generating systems with additional dynamics in red.} 
    \label{fig:msd_systems}
\end{figure*}
    
% \begin{figure}
%     \centering
%     \includegraphics[width=0.49\textwidth]{Figures/3dofMSD_LPF.pdf}
%     \caption{3-DOF MSD. The linear dynamics of the $m_1$ and $m_2$ are assumed known, while the dynamics of $m_3$ and the contribution of $a_1$ are unknown.}
%     \label{fig:3dof_msd}
% \end{figure}

\begin{table}
    \centering
    \caption{Physical parameters of the 3-DOF MSD systems.}
    \vspace*{5pt}
    \label{tab:params_msd}
    \begin{tabular}{c||r|r|r|r} 
        Body & Mass $m_i$ & Spring $k_i$ & Damper $c_i$ & Hardening $a_i$ \\
        \hline
        1  & $0.5 \mathrm{~kg}$ & $100 \mathrm{~\frac{N}{m}}$ & $0.5 \mathrm{~\frac{N s}{m}}$  & $100 \mathrm{~\frac{N}{m^3}}$\\
        2 & $0.4 \mathrm{~kg}$  & $100 \mathrm{~\frac{N}{m}}$ & $0.5 \mathrm{~\frac{N s}{m}}$ & -\\
        3 & $0.1 \mathrm{~kg}$  & $100 \mathrm{~\frac{N}{m}}$ & $0.5 \mathrm{~\frac{N s}{m}}$ & -\\

\end{tabular}
\end{table}

\begin{table}
    \centering
    \caption{Approximate physical parameters 2-DOF MSD baseline model.}
    \vspace*{5pt}
    \label{tab:approx_params_msd}
    \begin{tabular}{c||r|r|r} 
        Body & Mass $m_i$ & Spring $k_i$ & Damper $c_i$ \\
        \hline
        1  & $0.5 \mathrm{~kg}$ & $95 \mathrm{~\frac{N}{m}}$ & $0.45 \mathrm{~\frac{N s}{m}}$ \\
        2 & $0.4 \mathrm{~kg}$  & $95 \mathrm{~\frac{N}{m}}$ & $0.45 \mathrm{~\frac{N s}{m}}$ \\

\end{tabular}
\end{table}

% We obtain the data for model estimation by applying 4th~order Runge-Kutta (RK4) based numerical integration on the 3-DOF MSD system with~$T_\mathrm{s}=0.02$~$\mathrm{s}$.

The system is simulated using a 4th order \emph{Runge-Kutta} (RK4) numerical integration with step size $T_s = 0.02$~s and synchronised zero-order-hold actuation and sampling. The values of the DT input signal $u_k$ are generated by a random-phase multisine with 1666 frequency components in the range $[0, 25]$ Hz with a uniformly distributed phase in $ [0, 2\pi)$. The sampled output measurements $y_k$ are perturbed by an additive white noise process $e_k\sim\mathcal{N}(0,\sigma_\mathrm{e}^2)$. Here, $\sigma_\text{e}$ is chosen so that the signal-to-noise ratio is equal to 30 dB. The generated sampled output $y_k$ for the input~$u_{k}$ is collected in the data set $\mathcal{D}_N$. Separate data sets are created for estimation, validation and testing of different realisations of sizes $N_{\text{est}} = 2\cdot10^4$, $N_{\text{val}} = 10^4$, $N_{\text{test}} = 10^4$, respectively. The estimation data comprise two periods, while the validation and testing data each contain a single period.

 \vspace{-1mm}
\subsection{Baseline model} \vspace{-1.5mm}
The baseline model is chosen to represent the linear 2-DOF MSD dynamics shown in Fig.~\ref{fig:msd_baseline}. We consider two initialisations for the baseline model parameters: the ideal values from Table~\ref{tab:params_msd} and the approximate values from Table~\ref{tab:approx_params_msd}. The \emph{root mean squared error}~(RMSE) of the simulation responses of the baseline model for these initial parameter values is shown in Table~\ref{tab:nrms_all} for configurations (a-c). Both initialisations perform relatively poorly, despite (approximately) representing a large part of the dynamics.

% \begin{table}[t]
%     \centering
%     \caption{Hyperparameters for identifying the LFR-based augmentation and ANN-SS models.}
%     \vspace*{4pt}
%     \begin{tabular}{c|c|c|c|c|c|c}
%     \! hidden layers\!&\!nodes\!&\!$n_u$ $n_y$\!&\!$n_a$ $n_b$\!&\!$T$\!&\!epochs\!&\!batch size\!\\
%     \hline
%     2 & 8 & 1 & 7 & 200 & 3000 & 2000 \\
%     \end{tabular}
%     \label{tab:Hyperparam}
% \end{table}

\begin{table}[t]
    \centering
    \caption{Hyperparameters for identifying the LFR-based augmentation and ANN-SS models.}
    \vspace*{4pt}
    \begin{tabular}{c|c|c|c|c|c|c}
    hidden layers
    & nodes
    & $n_a$ $n_b$
    & $T$
    & epochs
    & batch size\\
    \hline
    2 & 8 & 7 & 200 & 3000 & 2000 \\
    \end{tabular}
    \label{tab:Hyperparam}
    \vspace{-8pt}
\end{table}

\vspace{-1.5mm}
\subsection{Config.~(a): MSD with added cubic spring and mass}\label{sec:MSD_a} \vspace{-1.5mm}

% We estimate parameterisations of the LFR-based augmentation structure corresponding to representations of the state augmentation structures described in Table~\ref{tab:augmentation_structures} 
First, we consider the augmentation of the baseline model in a structured form using the introduced LFR model structure. For this we consider various parameterisations of the LFR matrix $W$ corresponding to the configurations in Table~\ref{tab:augmentation_structures} (not including S-SSI or S-DSI as we know a-priori that these will not model the system given the baseline model). For parallel augmentation structures, the learning components are chosen as feedforward neural networks, and for series augmentation structures, we choose ResNets to have a feasible initialisation (see Section~\ref{sec:parameter_init}). For all learning components, the number of hidden layers and neurons are listed in Table~\ref{tab:Hyperparam}. The activation function is chosen as $\tanh$. For the dynamic augmentation, we add two additional states to the baseline model states for a total of 6 states. This is the minimum number of states required to completely model the 3-DOF MSD system. The baseline model, learning component, and encoder parameters are jointly estimated as described in Section~\ref{sec:Method}. The hyperparameters for these estimations are shown in Table~\ref{tab:Hyperparam}, with 16 nodes and 2 hidden layers used for the encoder. The regularisation tuning parameter $\lambda$ in the joint identification cost function was set at $\lambda=1$ for ideal initialisation. The identification criterion is optimised using Adam. As a comparison to a black-box approach, an ANN-SS model parameterised by ResNets is estimated with the SUBNET method \cite{beintema2024data} and hyperparameters from Table~\ref{tab:Hyperparam}.
% Finally, we examine the effectiveness of linear augmentation by estimating a model in S-DP configuration with parameterised matrices instead of feedforward neural networks as augmentation components.

The simulation RMSE on the test data for these estimated models is shown in Table~\ref{tab:nrms_all}. The dynamic augmentations are able to capture the dynamics accurately, while the static augmentations result in slightly less accurate models.
% The linear dynamic augmentations result in significantly worse models. This indicates that the inclusion of additional states and nonlinear augmentations is beneficial for obtaining accurate augmented models of the MSD system with configuration (a) under the considered baseline model.

In Fig.~\ref{fig:val_loss} we show the validation loss curves of select estimated models. The convergence speed of the ANN-SS and S-DSO models is slower than the S-DP model, while all models achieve similar RMSE scores.
% The ANN-SS models achieve an RMSE score similar to those of the dynamic augmentations, as shown in Table~\ref{tab:nrms_all}. However, the S-DP structure achieves this score in relatively few epochs, as shown in the validation loss curves in Fig.~\ref{fig:val_loss}, while the ANN-SS model requires 3000 epochs to reach an equivalent score.

The estimated physical parameters remain very close to the initialisation values for both the ideal and the approximate case. This is not desired behaviour for the approximate initialisations, indicating that the learning components are learning parts of the system dynamics that could be represented by the baseline model.
% Thus, the ideal initialisations are not negated, but the approximate initialisations are also not corrected. This indicates that the learning components are learning part of the system dynamics that could be represented by the baseline model.

\begin{figure}[]
    \hspace*{-20pt}
    \centering
    \includegraphics[width=\linewidth,trim={0 0 0 12pt},clip]{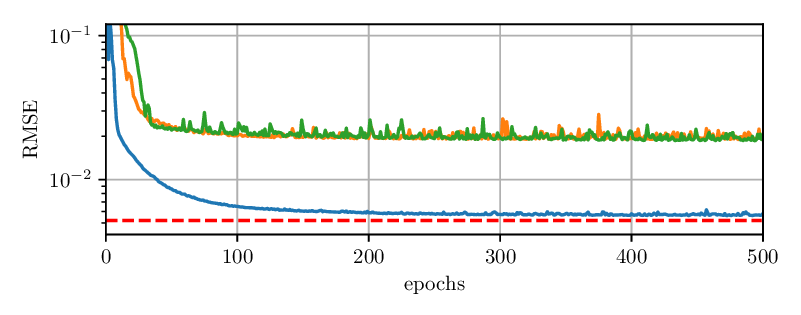}
    \vspace*{-12pt}
    \caption{Validation loss over first 500 training epochs for S-DP (\textcolor{pyplotblue}{\raisebox{0.5mm}{\rule{0.3cm}{0.6mm}}}), S-DSO (\textcolor{pyplotorange}{\raisebox{0.5mm}{\rule{0.3cm}{0.6mm}}}) and black-box  ANN-SS (\textcolor{pyplotgreen}{\raisebox{0.5mm}{\rule{0.3cm}{0.6mm}}}) models estimated for the MSD system with configuration (a). The noise floor is shown with a red dashed line (\textcolor{pyplotred}{\raisebox{0.5mm}{\rule{0.1cm}{0.6mm}} \!\raisebox{0.5mm}{\rule{0.1cm}{0.6mm}}}).}
    \label{fig:val_loss} \vspace{-1.5mm}
\end{figure}

In Fig.~\ref{fig:state_comparison}, we show the comparison between the states~$\hat x$ of the S-DP model (\textcolor{pyplotblue}{\raisebox{0.5mm}{\rule{0.3cm}{0.6mm}}}) and the outputs of the learning components $\phi_\text{aug}$ (\textcolor{pyplotorange}{\raisebox{0.5mm}{\rule{0.3cm}{0.6mm}}}). Here, $x_\text{b}= [x_1^\top \ldots ~ x_4^\top]^\top$ and $x_\text{a} = [x_5^\top x_6^\top]^\top$. The effect of the learning components is relatively small for $x_\text{b}$, while $x_\text{a}$ is modelled solely by the learning components. From this, we can conclude that the learning components are augmenting the baseline and not replacing the baseline model with their own dynamics.
% Therefore, we have obtained an accurate model whose dominant behaviour is given by an interpretable baseline model.
% However, as mentioned above, we cannot state that part of the dynamics that can be represented by the baseline model is learnt instead by the augmentation. To address this issue specifically for the parallel augmentation case, see \cite{gyorok_orthogonal_2025}.

% \begin{figure}[]
%     \centering
%     \includegraphics[width=\linewidth]{Figures/fitted_models_pred_error.eps}
%     \vspace*{-15pt}
%     \caption{Simulation error the nonlinear state augmented model (\textcolor{pyplotorange}{\raisebox{0.5mm}{\rule{0.3cm}{0.6mm}}}) compared to the measured test data (\textcolor{pyplotblue}{\raisebox{0.5mm}{\rule{0.3cm}{0.6mm}}}) for the system (a).}
%     \label{fig:simulation_error}
% \end{figure}

\begin{figure}[]
    \centering
    \includegraphics[width=\linewidth,trim={5pt 10pt 6pt 42pt},clip]{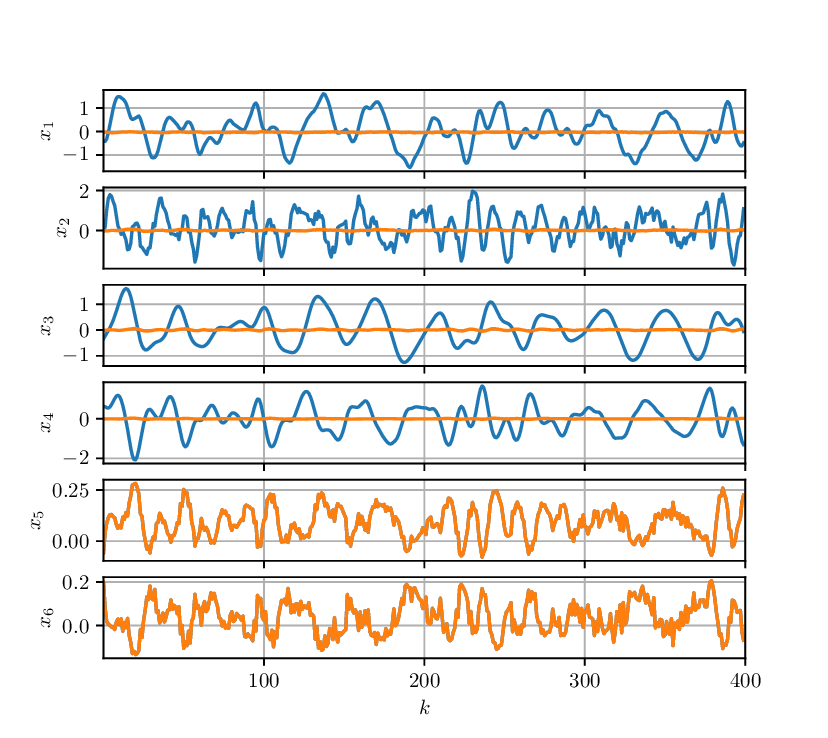}
    \vspace*{-7mm}
    \caption{Comparison of the augmented model with dynamic parallel configuration states $\hat x$ (\textcolor{pyplotblue}{\raisebox{0.5mm}{\rule{0.3cm}{0.6mm}}}) and the outputs of the learning components $\phi_\text{aug}$ (\textcolor{pyplotorange}{\raisebox{0.5mm}{\rule{0.3cm}{0.6mm}}}) for a simulation with test data of the MSD system with configuration (a). States $x_1$ - $x_4$ are  states of the augmented model based on the sum of $f_\text{base}$ and $f_\text{aug}$, while states  $x_5$ - $x_6$ are the output of the dynamic augmentation component $g_\text{aug}$.}
    \label{fig:state_comparison}
\end{figure}

\vspace{-1mm}
\subsection{Config. (b): MSD with added input saturation} \vspace{-1.5mm}
For the MSD system in configuration (b), we applied similar identification steps as in Section~\ref{sec:MSD_a}. We again estimate S-SP and S-DP as described in Table~\ref{tab:augmentation_structures}. We, however, further combine these parameterisations with a series input augmentation to characterise the input saturation, which we note as S-SP-I with the following structure \vspace{-2.0mm}
\begin{equation}
    \xbasekplus =  f_\text{base}\left(\xbasek, g_\text{aug}\left(\uk\right)\right) + f_\text{aug}\left(\xbasek, \uk\right). \vspace{-1.5mm}
\end{equation}
A similar structure is used for a S-DP with series input augmentation noted as S-DP-I. The remaining parameterisations and hyperparameters are as in Section~\ref{sec:MSD_a}. We again estimate an ANN-SS model parameterised by ResNets as a black-box comparison.

The simulation RMSE on the test data for these estimated models are shown in Table~\ref{tab:nrms_all}. All selected augmentations result in accurate models. However, the series input augmentation specifically models the input saturation as a function of $u_k$ and thus results in a more interpretable model. The convergence of the estimated models are similar as in Fig.~\ref{fig:val_loss}, with the S-SP-I and S-DP-I models converging faster than the backbox ANN-SS, S-SP and S-DP models.

\begin{table*}[t]
\centering
\caption{RMSE of the simulated responses from the estimated models evaluated on the test sets generated by the MSD system with configurations (a), (b), and (c).}
\vspace*{-8pt}
\begin{tabular}{c||cc||cc||cc}
\multirow{2}{*}{Model} 
& \multicolumn{2}{c||}{Config. (a)} 
& \multicolumn{2}{c||}{Config. (b)} 
& \multicolumn{2}{c}{Config. (c)} \\
& Ideal & Approx. & Ideal & Approx. & Ideal & Approx. \\
\hline
baseline 
& $1.97 \cdot 10^{-1}$ & $1.77 \cdot 10^{-1}$ 
& $1.96 \cdot 10^{-1}$ & $1.86 \cdot 10^{-1}$ 
& $2.22 \cdot 10^{-1}$ & $2.13 \cdot 10^{-1}$ \\
\hline
S-SP 
& $6.73 \cdot 10^{-3}$ & $6.61 \cdot 10^{-3}$ 
& $5.53 \cdot 10^{-3}$ & $6.03 \cdot 10^{-3}$ 
& $5.86 \cdot 10^{-3}$ & $5.53 \cdot 10^{-3}$ \\
S-DP 
& \bf{$5.54 \cdot 10^{-3}$} & \bf{$5.77 \cdot 10^{-3}$} 
& $5.79 \cdot 10^{-3}$ & $5.80 \cdot 10^{-3}$ 
& $5.46 \cdot 10^{-3}$ & $5.39 \cdot 10^{-3}$ \\
S-SSO 
& $7.07 \cdot 10^{-3}$ & $6.80 \cdot 10^{-3}$ 
& -- & -- 
& -- & -- \\
S-DSO 
& \bf{$5.33 \cdot 10^{-3}$} & \bf{$5.57 \cdot 10^{-3}$} 
& -- & -- 
& -- & -- \\
S-SP-I 
& -- & -- 
& $5.40 \cdot 10^{-3}$ & $5.78 \cdot 10^{-3}$ 
& -- & -- \\
S-DP-I 
& -- & -- 
& $5.44 \cdot 10^{-3}$ & $5.41 \cdot 10^{-3}$ 
& -- & -- \\
S-SP \& O-DSO
& -- & -- 
& -- & -- 
& $5.45 \cdot 10^{-3}$ & $5.42 \cdot 10^{-3}$ \\
S-DP \& O-DSO
& -- & -- 
& -- & -- 
& $5.39 \cdot 10^{-3}$ & $5.38 \cdot 10^{-3}$ \\
\hline
blackbox ANN-SS 
& \multicolumn{2}{c||}{\bf{$5.72 \cdot 10^{-3}$}} 
& \multicolumn{2}{c||}{\bf{$6.37 \cdot 10^{-3}$}} 
& \multicolumn{2}{c}{\bf{$5.55 \cdot 10^{-3}$}} \\
\end{tabular}
\label{tab:nrms_all}
\vspace{-2mm}
\end{table*}

\vspace{-1.5mm}
\subsection{Config. (c): MSD with added output low-pass-filter}\vspace{-1.5mm}
For the MSD system in configuration (c), we applied similar identification steps as in Section~\ref{sec:MSD_a}. We again estimate the S-SP and S-DP described in Table~\ref{tab:augmentation_structures}. We now further estimate these state augmentations with and O-DSO described in Table~\ref{tab:output_augmentation_structures}. The O-DSO is parameterised by an LTI model with a single state, which is capable of modelling the dynamics of the first-order LPF. The remaining parameterisations and hyperparameters as in Section~\ref{sec:MSD_a}. We again estimate an ANN-SS model parameterised by ResNets as a black-box comparison.

The simulation RMSE on the test data for these estimated models is shown in Table~\ref{tab:nrms_all}. All selected augmentations result in accurate models. The convergence of the estimated models are similar as in Fig.~\ref{fig:val_loss}, with the O-DSO models, S-DP and S-SP converging faster than the backbox ANN-SS model.
% We further show in Fig.~\ref{fig:val_loss_output} that the convergence of both augmentations with and without the output augmentation is similar.

In Fig.~\ref{fig:bodeplot_lpf}, the Bode plot of $h_\text{aug}$ from both the  estimated O-DSO model, with S-SP and S-DP state models, are shown compared against the true system LPF. We can see that the output augmentations model behaviour similar to the system LPF thus enhancing the interpretability of the estimated model compared to the model augmentations without the output augmentation where this behaviour will have to be modelled in the state augmentation. Ensuring that output augmentations model the LPF exactly is an identifiability problem left to future research.

\begin{figure}[]
    \vspace{-3.8mm}
    \hspace*{-5pt}
    \centering
    \includegraphics[width=\linewidth,clip]{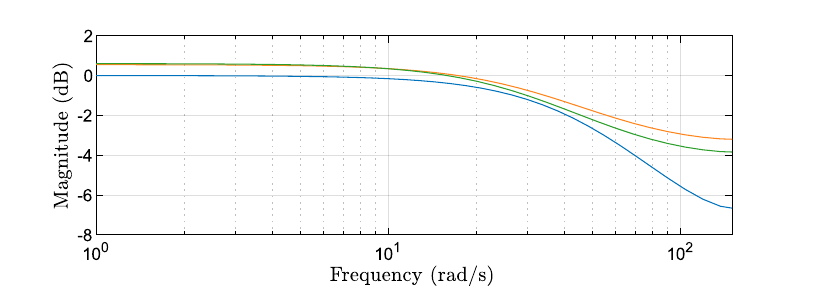}
    \vspace*{-9pt}
    \caption{Bode plot of both O-DSO estimated with the S-DP (\textcolor{pyplotorange}{\raisebox{0.5mm}{\rule{0.3cm}{0.6mm}}}) and S-SP (\textcolor{pyplotgreen}{\raisebox{0.5mm}{\rule{0.3cm}{0.6mm}}}), compared against the LPF included in system configuration (c) (\textcolor{pyplotblue}{\raisebox{0.5mm}{\rule{0.3cm}{0.6mm}}}).}\vspace{-4pt}
    \label{fig:bodeplot_lpf}
\end{figure}

 \vspace{-1mm}\section{Experimental study}\label{sec:RealWorldIdent} \vspace{-1.5mm}
In this section, we demonstrate the capabilities of the general LFR-based model augmentation structure and the proposed estimation approach by identifying the dynamics of an F1Tenth electric vehicle, using experimental data.

\vspace{-1mm}
\subsection{F1Tenth vehicle}
\vspace{-1.5mm}
F1tenth is a 1/10 scale model of an electric car, which has been mainly developed as a test platform for various automotive applications \cite{agnihotri_teaching_2020}. To demonstrate the capabilities of the proposed LFR-based model augmentation structure, the dynamics of such a vehicle are identified in this section. In contrast to Section~\ref{sec:SimulationStudy}, measurements from a real F1tenth are used instead of simulation data. An in-depth description of the used vehicle and test environment is available in \cite{floch_gaussian-process-based_2025}.
%
% The F1tenth car is displayed in Fig.~\ref{fig:f1tenth_meas_setup}. A brushless DC motor drives all four wheels of the vehicle,  and the steering is controlled by a servo motor. A \emph{Vedder Electronic Speed Controller} (VESC) unit is used to drive the actuators. The VESC unit also has a built-in IMU and speed sensor. For measuring the position and the orientation of the vehicle, an OptiTrack motion capture system is used that is capable of high-accuracy measurement of these variables based on passive markers placed on the car. For a more in-depth description of the test environment, refer to \cite{floch_gaussian-process-based_2025}.
%
% \begin{figure}
%     \hspace*{-5pt}
%     \centering
%     \includegraphics[width=0.7\linewidth]{Figures/f1tenth_meas_setup.pdf}
%     \vspace*{-5pt}
%     \caption{F1tenth autonomous small-scale electric vehicle.}
%     \label{fig:f1tenth_meas_setup}
% \end{figure}

\vspace{-1mm}
\subsection{Baseline model of the F1Tenth vehicle}
\vspace{-1.5mm}
To develop a baseline model of the F1Tenth platform, the so-called single-track model has been used \cite{paden_survey_2016}. The model is illustrated in Fig.~\ref{fig:single_track}, and can be expressed by using six state variables. The baseline states are the position of the \emph{center-of-gravity} (CoG) in the $\left(X,Y\right)$ plane $\left(p_x,p_y\right)$, the orientation of the vehicle $\varphi$, which is measured from the $X$ axis, the longitudinal and lateral velocities of the vehicle, and the yaw rate. The control inputs are the steering angle $\delta$, and the PWM percentage of the electric motor that provides the main propulsion of the vehicle. The equations of the single-track model can be derived in continuous time, and the resulting model is discretised using the RK4 scheme. Since the used OptiTrack motion capture system measures the position and orientation of the vehicle, while the built-in IMU and speed sensors provide information regarding the velocity components, full state measurements are available. Hence, the baseline output function becomes $\hat{y}_k = \xbasek$. To model the longitudinal tire force component $F_\xi$, an empirical drivetrain model \cite{floch_gaussian-process-based_2025} is applied, while the linearised Magic Formula \cite{pacejka_chapter_2012} has been utilised to model the lateral tire force components ($F_{\mathrm{f,}\eta}$ and $F_{\mathrm{r,}\eta}$). For a detailed derivation and discussion of the baseline model, refer to \cite{gyorok_orthogonal_2025, floch_model-based_2022}.  %The model is illustrated in Fig.~\ref{fig:single_track}, and expressed as

The applied tire models (especially the empirical drivetrain model) are highly approximative and are the primary sources of inaccuracy in the baseline model; hence, identifying the dynamics of the F1Tenth vehicle is challenging even when incorporating existing physical knowledge into the model structure. Moreover, there are 9 baseline parameters (such as mass, inertia, distance of the rear and front end from the CoG, and parameters corresponding to the tire models) that need to be estimated. Initial values of these parameters were determined in \cite{floch_model-based_2022}. However, some elements of this initial parameter vector $\theta_\mathrm{base}^0$ are highly approximative. Hence, to achieve an accurate representation of the true dynamics, $\theta_\mathrm{base}$ is tuned jointly with the model parameters.

\begin{figure}
    \hspace*{-5pt}
    \centering
    \includegraphics[scale=0.8]{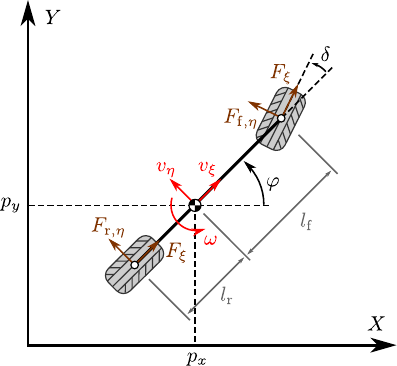}
    \vspace*{-5pt}
    \caption{Illustration of the single track model.}
    \label{fig:single_track} \vspace{-1mm}
\end{figure}

% The baseline states are $x_\mathrm{b}=\begin{bmatrix}
%     p_x & p_y & \varphi & v_\xi & v_\eta & \omega
% \end{bmatrix}^\top$. The control inputs are $\delta$ and $d$. We assume that full-state measurements are available, i.e., the baseline output function can be expressed as
% \begin{equation}
%     \hat{y}_k = \xbasek.
% \end{equation}
% As discussed in Section~\ref{sec:ProblemStatement}, we have assumed that the baseline model dynamics are given in DT form; thus, \eqref{eq:car_EoM} is discretised with the RK4 scheme. 
%The baseline parameters are
% \begin{equation}\label{eq:f1tenth_theta_base}
%     \theta_\mathrm{base}=\begin{bmatrix}
%         m & J_z & l_\mathrm{r} & l_\mathrm{f} & C_\mathrm{m1} & C_\mathrm{m2} & C_\mathrm{m3} & C_\mathrm{r} & C_\mathrm{f}
%     \end{bmatrix}^\top,
% \end{equation}
% and their nominal values have been determined in \cite{floch_model-based_2022}. However, some elements of the determined nominal parameter vector $\theta_\mathrm{base}^0$ are highly approximative; to achieve an accurate representation of the true dynamics, $\theta_\mathrm{base}$ in \eqref{eq:f1tenth_theta_base} is tuned jointly with the ANN parameters.
%

\vspace{-1mm}
\subsection{Data acquisition}
\vspace{-1.5mm}
%The measured outputs have been selected based on the applied baseline model. With the discussed measurement setup, all baseline states and the two input signals are measured directly. 
%$\left(p_x,\,p_y\right)$ position of the CoG in the reference plane $\left(X,\, Y\right)$, $\varphi$ heading angle, that is measured from the $X$ axis, $v_\xi$ longitudinal velocity, $v_\eta$ lateral velocity, and $\omega$ angular velocity (yaw rate).  With the discussed measurement setup, all outputs and the two input signals are measured directly.
%To achieve robust identification results, two different trajectories have been used for training the network.
A lemniscate-shaped trajectory has been selected for generating measurement data because, by following it, the heading angle traverses the whole operational domain, and the resulting motion has quick changes in velocity. A second trajectory has been chosen to be a circle-shaped path because it is also a typical maneuver for this type of vehicle. Measurement data has been collected with a sampling frequency of $f_\mathrm{s}=40$ Hz. To acquire data with various motor PWM inputs, the velocity references have been varied for both trajectories, ranging from 0.45 m/s to 1 m/s with step increments of 0.05 m/s, resulting in a total of 24 measurement records. Half of these measurements have been separated into training and test sets. Both trajectories with alternating reference velocities have been included in all data sets. Before concatenating the measurement signals for the training data set, %20\% of the measured data was randomly selected to form a validation data set.
contiguous segments (corresponding to 20\% of the length of each of the 12 training signals) were randomly selected to form the validation data set. A total of 6467 samples are used for estimation, 1669 for validation, and 8041 for testing.

\subsection{Estimated models}
%As one might see, in the equation of motion \eqref{eq:car_vxi}, \eqref{eq:car_veta}, and \eqref{eq:car_omega} contain the drivetrain and tire dynamics, which hold great amounts of uncertainty, and hence can be the reason for possible model errors. In contrary, \eqref{eq:car_x}, \eqref{eq:car_y}, and \eqref{eq:car_phi} contain simple kinematic relations and can be separated from the rest of the equations.
As proposed by \cite{szecsi_deep_2024, gyorok_orthogonal_2025}, to simplify the neural network structure, we only identify the input-to-velocity relationship by detaching the integrators. Hence, the outputs (and consequently, the baseline states) become the longitudinal and lateral velocity components, as well as the yaw rate of the vehicle. Then, after identification, by putting back the integrator dynamics, the position and orientation values can also be obtained during model simulation.

As all baseline model states are measured, the initial values of $x_\mathrm{b}$ are known for all subsections when calculating the truncated prediction loss, i.e., the encoder network only estimates the augmented states in case a dynamic model augmentation structure is applied. For the latter scenario, a fully-connected feedforward ANN with 2 hidden layers and 64 nodes per layer has been selected for the encoder network with the tanh activation function. Based on previous black-box identification results on the same dataset (see \cite{szecsi_deep_2024}), an encoder lag of $n_a=n_b=12$ has been applied. The augmented state dimension has been selected based on physical insight and a short trial-and-error period as $n_{x_\mathrm{a}}=2$. For the LFR-based structure, $n_{z_\mathrm{a}}=4$ has been applied, while $n_{w_\mathrm{a}}=3$ and $n_{w_\mathrm{a}}=5$ were selected for the static and dynamic model augmentations, respectively. A regularisation coefficient of $\lambda=0.01$ has been applied, as a result of a line-search. All other hyperparameters are summarised in Table~\ref{tab:f1tenth_hyperparams}.

\begin{table}
    \centering
    \caption{Hyperparameters of the LFR-based model augmentation structure for identifying the dynamics of the F1Tenth vehicle.}
    \vspace*{4pt}
    \begin{tabular}{c|c|c|c|c|c}
    \!hidden layers\!&\!nodes\!&\!$n_a$ $n_b$\!&\!$T$\!&\!epochs\!&\!batch size\!\\
    \hline
    2 & 128 & 12 & 40 & 3000 & 256 \\
    \end{tabular}
    \label{tab:f1tenth_hyperparams} 
\vspace{-1.5mm}
\end{table}

As discussed in Remark~\ref{rem:LFR_WP_options}, there are a few possible strategies to ensure the well-posedness of the LFR-based structure \eqref{eqs:general_LFR}. The most straightforward one is to restrict $D_{zw}\equiv 0$, while a more general approach is to select either $D_{zw}^\mathrm{ba}$ or $D_{zw}^\mathrm{ab}$ to be tuned freely, while the rest of $D_{zw}$ is set to zero. To demonstrate these approaches, we have trained models with different options regarding the structure of the $D_{zw}$ matrix.
% Furthermore, to demonstrate the discussed simplification in Section~\ref{sec:user_guidelines}, namely that
Furthermore, we demonstrate the enforcing of model structures in the LFR matrix $W$ by constraining the $C_z^\mathrm{b}$ and $D_{zu}^\mathrm{b}$ matrices such that $\zbasek \equiv \mathrm{vec}\left(\xbasek, \uk\right)$, we have also trained models with that setting. The results are summarised in Table~\ref{tab:f1tenth_results}, where the augmented models are compared with state-of-the-art black-box identification results and the baseline model with nominal parameters. As the integrator dynamics make it difficult to obtain accurate long-term predictions in practice, the presented error values only consider the velocity components of the output. This is in line with previous black-box results using the same data set, see \cite{szecsi_deep_2024}. Still, to demonstrate the accuracy of the simulated position and orientation values, Fig.~\ref{fig:f1tenth_results_comparison} presents these signals obtained using the best-performing static and dynamic LFR-based models for an arbitrarily selected test trajectory. Notably, both models demonstrate remarkable accuracy compared to the real measured data, even over extended open-loop simulations lasting 16–18 seconds. It is important to note that the reported errors are also influenced by the employed numerical integration scheme. This effect is particularly visible in the case of the dynamic LFR model, where the simulated trajectory initially closely follows the measured data, but the accuracy continuously deteriorates due to the accumulation of integration errors and unknown input disturbances.

Further analysing the results shown in Table~\ref{tab:f1tenth_results}, it is visible that the most general options (not fixing $\zbase$, and tuning $D_{zw}^\mathrm{ab}$) have resulted in the best model accuracy for the static structure. 
%However, as the baseline model misses certain states associated with the data-generating system, all dynamic augmentation structures have outperformed the static models. 
As the applied approximative baseline model only expresses the dominant high-order dynamics of the real system, all dynamic augmentation structures have outperformed the static models. Introducing the augmented states in the LFR-based structure increases the model DoF compared to static structures. This helps explain why the highest accuracy for the dynamic LFR augmentation is obtained when certain constraints are imposed on the LFR matrix. This example highlights the importance of selecting the optimal model complexity: richer parametrisations can improve expressiveness, but overly flexible models may suffer from reduced robustness. Introducing suitable restrictions can lead to improved performance by mitigating variance effects. It is also worth noting that multiple dynamic LFR-based augmented models have resulted in better model accuracies than the black-box methods.

\begin{table}
    \centering
    \caption{Normalised root mean squared simulation error of the estimated models on the F1Tenth identification study.}
    \begin{tabular}{c||c}
    Model \!&\! Test NRMS error \\
    \hline
     static LFR-based ($D_{zw}\equiv 0$) & 10.71\% \\
     \vspace*{3pt}
     static LFR-based ($D_{zw}^\mathrm{ab}$ tuned) & {\bf 9.27\%} \\
    \vspace*{3pt}
     static LFR-based ($D_{zw}^\mathrm{ba}$ tuned) & 10.44\% \\
    static LFR-based ($D_{zw}\equiv 0$, $z_\mathrm{b}$ fixed) & 9.42\% \\
     \vspace*{3pt}
     static LFR-based ($D_{zw}^\mathrm{ab}$ tuned, $z_\mathrm{b}$ fixed) & 9.90\% \\
    \hline
    dynamic LFR-based ($D_{zw}\equiv 0$) & 8.79\% \\
    \vspace*{3pt}
    dynamic LFR-based ($D_{zw}^\mathrm{ab}$ tuned) & 8.52\% \\
    \vspace*{3pt}
     dynamic LFR-based ($D_{zw}^\mathrm{ba}$ tuned) & 8.99\% \\
    dyn. LFR-based ($D_{zw}\equiv 0$, $z_\mathrm{b}$ fixed) & {\bf 8.25\%} \\
     \vspace*{3pt}
     dyn. LFR-based ($D_{zw}^\mathrm{ab}$ tuned, $z_\mathrm{b}$ fixed) & 8.41\% \\
    \hline
    Initial baseline model & 49.12\%\\
    \vspace*{3pt}
    DT SUBNET (black-box, \cite{szecsi_deep_2024}) & 8.53\%\\
    \vspace*{3pt}
    CT SUBNET (black-box, \cite{szecsi_deep_2024}) & 8.99\%
    \end{tabular}
    \label{tab:f1tenth_results}
\end{table}

\begin{figure}
    \hspace*{-5pt}
    \centering
    \includegraphics[width=\columnwidth]{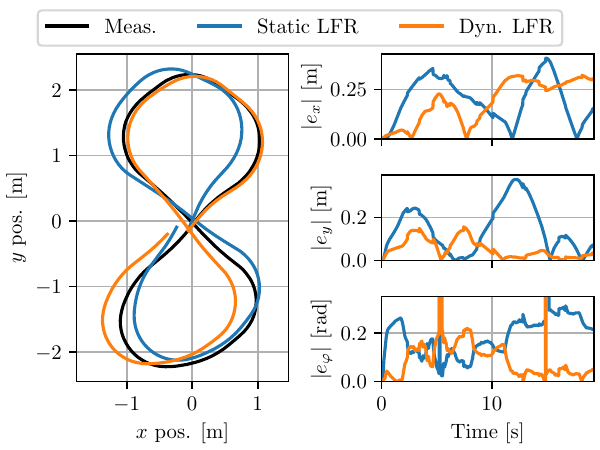}
    \vspace*{-10pt}
    \caption{Comparing the simulated model response with measurements on the test data.}
    \label{fig:f1tenth_results_comparison} 
\vspace{-1.5mm}
\end{figure}

%The convergence of the augmentation structures that achieved the best model accuracy (i.e., when $D_{zw}^\mathrm{ab}$ is tuned for the static case, and when $D_{zw}\equiv 0$, $z_\mathrm{b}$ is fixed for the dynamics case) is compared to the convergence of the DT SUBNET approach (a state-of-the-art black-box identification method) in Fig.~\ref{fig:f1tenth_convergence}. As visible, all three methods have shown similar convergence rates, however, the dynamic LFR-based method performed the best in terms of convergence speed. 
All model augmentation structures have shown similar convergence properties as the DT SUBNET approach (a state-of-the-art black-box identification method); however, the dynamic LFR-based method performed the best in terms of convergence speed. Hence, the proposed model augmentation structure was able to generate more accurate results with better convergence properties than black-box methods.

% \begin{figure}
%     \hspace*{-5pt}
%     \centering
%     \includegraphics[width=\columnwidth]{Figures/convergence.pdf}
%     \vspace*{-15pt}
%     \caption{Validation loss over first 1000 training epochs for the static LFR-based ($D_{zw}^\mathrm{ab}$ tuned) and dynamic LFR-based  ($D_{zw}\equiv 0$, $z_\mathrm{b}$ fixed) structures compared to the convergence of a state-of-the-art black-box identification method.}
%     \label{fig:f1tenth_convergence}
% \end{figure}

\vspace{-1.5mm}\section{Conclusion}\label{sec:Conclusion}
\vspace{-1.5mm}
In this paper a novel general LFR-based model augmentation has been proposed that provides unified representation structure. The model was also expressed in graph form, offering insight into the structural patterns that characterise and enable the detection of specific augmentation structures. In addition, we established conditions ensuring the well-posedness of the proposed model structure.
% Finally, we adapted an encoder-based estimation method for the novel augmentation structure with consistency guarantees.
To provide reliable estimation of the proposed model structure, an adaptation of the SUBNET approach was implemented, inheriting the consistency guarantees of the SUBNET approach. By analysing various augmentation configurations with simulation and  experimental data, we have shown that suitable positioning of the learning component provides faster convergence and can lead to more accurate models compared to state-of-the-art black-box approaches.

\vspace{-1.5mm}
\begin{ack}                               % Place acknowledgements
% This work is funded by the European Union (Horizon Europe, ERC, COMPLETE, 101075836) and has also been supported by the European Defence Fund programme under grant agreement number No 101103386 and has also been supported by the Air Force Office of Scientific Research under award number FA8655-23-1-7061. Views and opinions expressed are however those of the authors only and do not necessarily reflect those of the European Union or the European Research Council Executive Agency or the European Commission. Neither the European Union nor the granting authority can be held responsible for them.
%
%This project has received funding from the European Defence Fund programme under grant agreement number No 101103386 and has also been supported by the Air Force Office of Scientific Research under award number FA8655-23-1-7061. Views and opinions expressed are however those of the authors only and do not necessarily reflect those of the European Union or the European Commission. Neither the European Union nor the granting authority can be held responsible for them.  % here.
\vspace{-1.5mm}
We thank Mircea Lazar, Max Bolderman and Chris Verhoek for the helpful discussions %that were useful
for this work.
\end{ack}

\bibliographystyle{myplain}        % Include this if you use bibtex 
\bibliography{references}           % and a bib file to produce the 
                                 % bibliography (preferred). The
                                 % correct style is generated by
                                 % Elsevier at the time of printing.

\appendix
\vspace{-1mm} \section{Proof of Theorem~\ref{thm:representation_ma_structures}}\label{appendix:general_LFR_augmentation_forms} \vspace{-1mm}
% We show the derivation for the model augmentation structures in Table~\ref{tab:augmentation_structures}. The output augmentations follow in similar form as the state augmentation but instead considering the output mapping matrices. As such we leave out the output mapping here for ease of notation.
%
% \refo{
% Similarly for the output map, $C_y$, $D_{yu}$, and $D_{yw}^\mathrm{a}$ are initialised as zero matrices, moreover, $D_{yw}^\mathrm{b}$ as
% \begin{equation}
%     D_{yw}^{\mathrm{b},0} = \begin{bmatrix}
%         0_{n_y\times n_{\xbase}} & I_{n_y}
%     \end{bmatrix}.
% \end{equation}
% %
% }
%
% \OL{We assume  $n_{z_\mathrm{a}}$ and $n_{w_\mathrm{a}}$ to be of appropriate dimensions to facilitate the defined $z$ and $w$}
% $n_{z_\mathrm{a}}=n_{\xbase}+n_u+n_{\xaug}$ and $n_{w_\mathrm{a}}=n_{\xbase}+n_{\xaug}$
We provide the proof by parts:

\textit{Parallel augmentation at the state level:} Choose  $n_{z_\mathrm{a}}=n_{\xbase}+n_u+n_{\xaug}$ and $n_{w_\mathrm{a}}=n_{\xbase}+n_{\xaug}$. Take $\zbasek \equiv \mathrm{vec}\left(\xbasek, u_k\right)$ and $\zaugk \equiv \mathrm{vec}\left(\xbasek, \xaugk, u_k\right)$. For $\zbasek$, this is achieved by setting \vspace{-1.5mm}
\begin{equation}
    C_z^{\mathrm{b}} = \begin{bmatrix}
        I_{n_{\xbase}} & 0_{n_{\xbase}\times n_{\xaug}}\\
        0_{n_u\times n_{\xbase}} & 0_{n_{\xbase}\times n_{\xaug}}
    \end{bmatrix},\quad
    D_{zu}^{\mathrm{b}} = \begin{bmatrix}
        0_{n_{\xbase}\times n_u}\\ I_{n_u}
    \end{bmatrix},\label{eq:Cz_b0} \vspace{-1.5mm}
\end{equation}
while $D_{zw}^\mathrm{bb}$, and  $D_{zw}^\mathrm{ba}$ are set to zero, while for $\zaugk$, this is achieved by setting \vspace{-3.5mm}
\begin{equation*}
    C_z^\mathrm{a} = \begin{bmatrix}
        I_{n_{\xbase}+n_{\xaug}}^\top&
        0_{n_u\times (n_{\xbase}+n_{\xaug})}^\top
    \end{bmatrix}^\top,\quad
    D_{zu}^\mathrm{a} = \begin{bmatrix}
        0_{(n_{\xbase}+n_{\xaug})\times n_u}^\top&
        I_{n_u}^\top
    \end{bmatrix}^\top, \vspace{-2.5mm}
\end{equation*}
and setting $D_{zw}^\mathrm{ab}$, and $D_{zw}^\mathrm{aa}$ as zeros. Next, we take $\hat x_{k+1} \equiv \wbasek + \waugk$. This is achieved by setting $B_w^{\mathrm{b}} = I_{n_{\xbase}}$, 
% \begin{equation}
%     B_w^{\mathrm{b}} = \begin{bmatrix}
%         I_{n_{\xbase}}
%     \end{bmatrix},
% \end{equation}
and setting $A^\mathrm{bb}$, $A^\mathrm{ba}$, $B_u^\mathrm{b}$, and $B_w^\mathrm{ba}$ as zeros. Then, we have \vspace{-3.5mm}
\begin{equation}
    \begin{bmatrix}
        x_{\mathrm{b}, k+1}\\ x_{\mathrm{a}, k+1}
    \end{bmatrix} = \begin{bmatrix}
        f_\mathrm{base}(\xbasek, u_k)\\ 0_{n_{\xaug}\times n_{\xaug}}
    \end{bmatrix} + \phi_\mathrm{aug}(\xbasek, \xaugk,u_k), \vspace{-2.5mm}
\end{equation}
which is equivalent to the dynamic parallel state augmentation structure. Moreover, since the dynamic parallel structure 
is a generalisation of the static parallel, selecting $n_{\xaug}=0$ results in the static parallel state augmentation structure \vspace{-2.5mm}
\begin{equation}
    x_{\mathrm{b}, k+1} = f_\mathrm{base}(\xbasek, u_k) + \phi_\mathrm{aug}(\xbasek,u_k). \vspace{-1.5mm}
\end{equation}

\textit{Series output augmentation at the state level:} Choose $n_{z_\mathrm{a}}=2n_{\xbase}+n_{\xaug}+n_u$, and $n_{w_\mathrm{a}}=n_{\xbase}+n_{\xaug}$. Take $\zbasek \equiv \mathrm{vec}\left(\xbasek, u_k\right)$, and $\zaugk \equiv \mathrm{vec}\left(\xbasek, \xaugk, u_k, f_\mathrm{base}(\xbasek, u_k)\right)$. The former can be achieved by setting $C_z^\mathrm{b}$, $D_{zu}^\mathrm{b}$ as in \eqref{eq:Cz_b0}, and $D_{zw}^\mathrm{bb}$, $D_{zw}^\mathrm{ba}$ as zeros. The latter can be realised by restricting \vspace{-3.5mm}
\begin{align}
    C_z^\mathrm{a} &= \begin{bmatrix}
        I_{n_{\xbase}+n_{\xaug}}^\top & 0_{(n_u+n_{\xbase})\times(n_{\xbase}+n_{\xaug})}^\top
    \end{bmatrix}^\top,\\
    D_{zu}^\mathrm{a} &= \begin{bmatrix}
        0_{(n_{\xbase}+n_{\xaug})\times n_u}^\top &
        I_{n_u}^\top &
        0_{n_{\xbase}\times n_u}^\top
    \end{bmatrix}^\top,\\
    D_{zw}^\mathrm{ab} &= \begin{bmatrix}
        0_{(n_{\xbase}+n_{\xaug}+n_u)\times n_{\xbase}}^\top &
        I_{n_{\xbase}^\top}
    \end{bmatrix}^\top, 
\end{align} \vskip -4mm
and selecting $D_{zw}^\mathrm{aa}$ as a zero matrix. Then, by selecting $A$, $B_u$, and $B_w^\mathrm{b}$ as zero matrices and setting $B_w^\mathrm{a}$ an identity matrix, the resulting state transition function is \vspace{-1.5mm}
\begin{equation}
    \hat{x}_{k+1} = \phi_\mathrm{aug}(\xbasek, \xaugk, u_k, f_\mathrm{base}(\xbasek, u_k)). \vspace{-1.5mm}
\end{equation}
By universal approximation properties, there exist such weights and biases for $\phi_\mathrm{aug} = \begin{bmatrix}
    (\phi_\mathrm{aug}^\mathrm{b})^\top & (\phi_\mathrm{aug}^\mathrm{a})^\top
\end{bmatrix}^\top$ such that \vspace{-3.5mm}
\begin{equation}
    \begin{bmatrix}
        x_{\mathrm{b}, k+1}\\ x_{\mathrm{a}, k+1}
    \end{bmatrix} = \begin{bmatrix}
        \phi_\mathrm{aug}^\mathrm{b}(\xbasek, \xaugk, u_k, f_\mathrm{base}(\xbasek, u_k))\\
        \phi_\mathrm{aug}^\mathrm{a}(\xbasek, \xaugk, u_k)
    \end{bmatrix}, \vspace{-2.5mm}
\end{equation}
which is equivalent to the dynamic series output state augmentation. Moreover, as we have shown previously, dynamic augmentation is a generalisation of static augmentation; hence, the structure also represents the static series output augmentation form at the state level.

\textit{Static series input augmentation at the state level:} Choose $n_{\zaug} = n_{\xbase}+n_u$ and $n_{\xaug}=0$. Take $\zaugk \equiv \mathrm{vec}\left(\xbasek, u_k, \right)$ by setting $C_z^\mathrm{a}$, $D_{zu}^\mathrm{a}$, $D_{zw}^\mathrm{ab}$,$D_{zw}^\mathrm{bb}$, $A$, $B_u$, $B_w^\mathrm{b}$ and $D_{zw}^\mathrm{aa}$ as for the \emph{parallel augmentation at the state level} structure. Set $B_w^\mathrm{a}\equiv 0$, $C_z^\mathrm{b}\equiv 0$, $D_{zu}^\mathrm{b}\equiv 0$ and $D_{zw}^\mathrm{ba}$ as an identity matrix. This achieves \vspace{-2.5mm} %a state transition as
\begin{align}
    \zbasek &= \phi_\mathrm{aug}(\xbasek, u_k),\\
    x_{\mathrm{b},k+1} &= f_\mathrm{base}(\zbasek), \vspace{-3.5mm}
\end{align}
\vskip -4.5mm
which is equivalent to the static series input augmentation form at the state transition level.

\textit{Dynamic series input augmentation on the state level:} Choose $n_{\zbase} = n_{\xbase}+n_{\xaug}+n_u$. Take $\zaugk \equiv \mathrm{vec}\left(\xbasek, \xaugk, u_k, \right)$ by setting $C_z^\mathrm{a}$, $D_{zu}^\mathrm{a}$, $D_{zw}^\mathrm{ab}$,$D_{zw}^\mathrm{bb}$, $A$, $B_u$ and $D_{zw}^\mathrm{aa}$ as for the \emph{parallel augmentation at the state level} structure. Set $C_z^\mathrm{b}\equiv 0$, $D_{zu}^\mathrm{b}\equiv 0$ and set $D_{zw}^\mathrm{ba}$ as an identity matrix. Consider the output of the learning component to be split into two parts $\phi_\mathrm{aug} = \begin{bmatrix}
    (\phi_\mathrm{aug}^\mathrm{b})^\top & (\phi_\mathrm{aug}^\mathrm{a})^\top
\end{bmatrix}^\top$.
% \OL{
% Select $D_{zw}^\mathrm{bb}$, $A$ and $B_u$ according to Appendix~\ref{appendix:parallel_state_augm_param}. Then , $C_z^\mathrm{b}\equiv 0$ and $D_{zu}^\mathrm{b}\equiv 0$ then set $D_{zw}^\mathrm{ba}$ as an identity matrix.
% $B_w^\mathrm{b} = \begin{bmatrix}
% \end{bmatrix}$ $B_w^\mathrm{a}\equiv 0$
% }
This achieves \vspace{-3.5mm} %a state transition as
\begin{equation}
    \begin{bmatrix}
        x_{\mathrm{b}, k+1}\\ x_{\mathrm{a}, k+1}
    \end{bmatrix} = \begin{bmatrix}
        x_{\mathrm{b},k+1} = f_\mathrm{base}(\phi_\mathrm{aug}^\mathrm{b}(\xbasek, \xaugk, u_k))\\
        \phi_\mathrm{aug}^\mathrm{a}(\xbasek, \xaugk, u_k)
    \end{bmatrix}, \vspace{-2.5mm}
\end{equation}
which is equivalent to the dynamic series input augmentation form at the state transition level.

\textit{Output augmentation structures:} The output augmentation formulations are similar in structure to the state augmentation cases, but the connection of the baseline and learning components happens at the output level. Following the previous arguments, it is straightforward to derive that all structures in Table~\ref{tab:output_augmentation_structures} can be represented by \eqref{eqs:general_LFR}, similarly to the structures in Table~\ref{tab:augmentation_structures}, which concludes the proof.

\end{document}